\documentclass[12pt,a4paper]{amsart}
\usepackage[headings]{fullpage}
\usepackage{amsmath,amssymb}
\usepackage[hidelinks]{hyperref}
\numberwithin{equation}{section}

\theoremstyle{plain}
\newtheorem{theorem}{Theorem}[section]
\newtheorem{thm}{Theorem}

\newtheorem{lemma}[theorem]{Lemma}
\newtheorem{proposition}[theorem]{Proposition}

\theoremstyle{definition}
\newtheorem{remark}[theorem]{Remark}

\newcommand{\BC}{\mathbb C}
\newcommand{\CP}{\mathcal P}
\newcommand{\CS}{\mathcal S}
\newcommand{\CO}{\mathcal O}
\DeclareMathOperator{\Tr}{Tr}
\DeclareMathOperator{\BetaFn}{B}

\title[Purity, entropy and spectral-volume decay for APPT states]{Spectral
Optimization for Absolutely PPT States: Purity, Entropy, and Volume Decay}

\author{Anh T. Tran}
\address{Department of Mathematical Sciences, The University of Texas at Dallas, Richardson, TX 75080, USA}
\email{att140830@utdallas.edu}

\subjclass[2020]{81P42}
\keywords{Absolutely separable state, absolutely PPT state, purity, von Neumann entropy, relative spectral volume}

\hypersetup{
  pdftitle={Spectral Optimization for Absolutely PPT States: Purity, Entropy, and Volume Decay},
  pdfauthor={Anh T. Tran},
  pdfsubject={Spectral geometry of absolutely PPT states},
  pdfkeywords={absolutely separable states, absolutely PPT states, purity, von Neumann entropy, relative spectral volume}
}

\begin{document}
\allowdisplaybreaks
\raggedbottom

\begin{abstract}
We study maximum purity and minimum von Neumann entropy of absolutely positive
partial transpose (APPT) states, together with the relative volume of their
spectral sets. A consequence of Hildebrand's
criterion gives an explicit outer spectral polytope, whose vertices we
classify. Optimizing purity over this polytope, together with the Song--Chen
result for the $2\otimes3$ system, yields, for every $mn\ge6$, an explicit
upper bound on the purity of APPT states that is asymptotic to $4/(3mn)$.
This improves the previous $2/(mn)$ upper bound for absolutely separable
states.  The new bound applies to every APPT state, a class containing all
absolutely separable states, and is sharp for every $2\otimes n$ system with
$n\ge3$. For every $3\otimes n$ system with $n\ge3$, however, the unique
outer-polytope maximizer is not APPT, so the bound is strict and disproves the
D\~ung--Kh\^oi qutrit--qudit conjecture. The polytope also gives an explicit
entropy lower bound in arbitrary bipartite dimensions and, together with the
Song--Chen extreme-point classification, the exact minimum entropy for every
$2\otimes n$ system.
Finally, exact formulas for the relative volumes of an inner polytope and the
outer spectral polytope give explicit two-sided bounds on the qubit--qudit
relative spectral volume $a_n$ whose ratio is less than $4$ and tends to $3$.
Consequently,
$a_n=\Theta\!\left(\sqrt n(4/27)^n\right)$, and the relative volume of the
qubit--qudit APPT spectral set (equivalently, the absolutely separable spectral
set) has exact exponential decay rate $\ln(27/4)$.
\end{abstract}
\maketitle

\section{Introduction and main results}

Let $2\le m\le n$.  A bipartite quantum state $\rho$ on
$\BC^m\otimes\BC^n$ is called separable if it can be written as a
convex combination of product states, namely,
$\rho =\sum_i \omega_i \, \rho_i^A \otimes \rho_i^B$, where
$\omega_i \ge 0$, $\sum_i \omega_i=1$, and $\rho_i^A$ and $\rho_i^B$
are density operators on
subsystems $A$ and $B$, respectively. Peres \cite{Pe} showed that any separable
state $\rho$ has a positive partial transpose (PPT), meaning that the
density matrix of $\rho$ remains positive semidefinite after taking the
partial transpose with respect to one of its subsystems.

Following \cite{KZ} and \cite{Hi}, we say that a state $\rho$ on
$\BC^m\otimes\BC^n$ is absolutely separable (resp. APPT) if it remains
separable (resp. PPT) under every global unitary transformation; that
is, $W\rho W^\dagger$ is separable (resp. PPT) for every unitary
operator
$$
W:\BC^m\otimes\BC^n\longrightarrow\BC^m\otimes\BC^n.
$$
It follows from
\cite{Pe} that the set of all absolutely separable states is a subset
of the set of all APPT states.  In the two-qubit case, the corresponding
spectral characterization was obtained by Verstraete, Audenaert and De Moor
\cite{VAD}.  Johnston \cite{Jo} completed the proof that the two sets are
identical for all qubit--qudit systems, namely, when $m=2$.  In higher
bipartite dimensions, whether the two sets coincide remains open \cite{AJR}.

In spectral terms, absolute separability asks whether changing only the global
eigenbasis of a state can generate entanglement from a fixed spectrum.  APPT is
a tractable relaxation that rules out the generation of a state whose partial
transpose is not positive semidefinite under any such change of eigenbasis.
Purity and entropy locate the least-mixed robust spectra, while relative
spectral volume measures how common these spectra are in the probability
simplex.

Hildebrand \cite{Hi} gave a necessary and sufficient spectral criterion for
the APPT property.  It consists of linear matrix inequalities in the
eigenvalues of $\rho$.  One consequence is the following inequality for every
APPT state $\rho$ on $\BC^m\otimes\BC^n$:
\begin{equation} \label{ineq}
\lambda_1 \le \lambda_{mn-1} + 2\sqrt{\lambda_{mn-2} \lambda_{mn}}.
\end{equation}
Throughout, $\lambda_1 \ge \lambda_2 \ge \cdots \ge \lambda_{mn}$ denote the
eigenvalues of the density matrix of $\rho$ in decreasing order.  In the
displayed spectra below, $\underbrace{z,\ldots,z}_{k}$ denotes $k$ repeated
entries equal to $z$; if $k=0$, the block is omitted.  Inequality \eqref{ineq}
and $2\sqrt{ab}\le a+b$ imply
$$
\lambda_1\le \lambda_{mn-2}+\lambda_{mn-1}+\lambda_{mn}.
$$
Consequently, every ordered APPT spectrum belongs to the polytope defined by
this last inequality together with nonnegativity, decreasing order, and
normalization.  We call it the outer spectral polytope.  The full criterion
in \cite{Hi} is more complicated in general.  For qubit--qudit systems
($m=2$), however, the single inequality \eqref{ineq} is necessary and
sufficient for the APPT property.

It is important to distinguish an upper bound that holds for every absolutely
separable state from a sufficient condition that guarantees absolute
separability.  Gurvits and Barnum \cite{GB} proved that a state $\rho$ on
$\BC^m\otimes\BC^n$ is separable if its purity satisfies
\begin{equation} \label{ineq-purity}
\Tr(\rho^2) \le \frac{1}{mn-1}.
\end{equation}
Since unitary transformations preserve purity, every state satisfying
\eqref{ineq-purity} is absolutely separable.  Thus \eqref{ineq-purity}
describes a known subset of absolutely separable states; it is not an upper
bound on the purity of every absolutely separable state.  Finding a necessary
and sufficient condition for absolute separability in general remains a
challenging problem.  More recent sufficient criteria in arbitrary dimensions
were obtained by Abellanet-Vidal et al.~\cite{AV+} using inverses of linear
maps and convex optimization.

Xiong and Sze \cite[Proposition 4]{XS} obtained the APPT purity bound
$\Tr(\rho^2)\le10/(mn+9)$.  An earlier result of Jivulescu et
al.~\cite[Proposition 8.2]{J+} states that every APPT state satisfies
$\lambda_1\le3/(mn+2)$.  Since $\Tr(\rho^2)\le\lambda_1$, it also gives
the APPT purity bound $3/(mn+2)$.  For absolutely separable states, Filippov et
al.~\cite[Proposition 1]{FMJ} obtained
$\Tr(\rho^2)\le9/(mn+8)$, and Kondra et al.~\cite[Lemma 5]{K+}
subsequently proved $\Tr(\rho^2)\le2/(mn)$ when
$mn\ge6$. Because every absolutely separable state is APPT, both APPT bounds
also apply to absolutely separable states. For comparison as bounds on
absolutely separable states, when $mn\ge6$ the four estimates are ordered
as
$$
\frac{2}{mn}<\frac{3}{mn+2}<\frac{9}{mn+8}<\frac{10}{mn+9}.
$$
Thus $2/(mn)$ is the strongest among these four dimension-independent
estimates.

For qubit--qudit systems ($m=2$), Song and Chen characterized the
extreme points of the APPT state set, which in these dimensions coincides
with the absolutely separable state set: the two-qubit case is treated in
their Theorem~9 and the case $n\ge3$ in their Theorem~10 \cite{SC}.
Comparing these extreme points, they obtained the maximum purity: it is
$3/8$ for $n=2$
\cite[Corollary~11(ii)]{SC}, while for $n\ge3$ it is $2/(3n)$ when
$n$ is even and $(6n+4)/(3n+1)^2$ when $n$ is odd
\cite[Corollary~11(iv)]{SC}.  Theorem~\ref{main-thm} independently
recovers these formulas for $n\ge4$ through the present optimization
argument; the case $n=3$ follows from Song and Chen.  They also found
the two-qubit minimum von Neumann entropy \cite[Corollary~11(ii)]{SC} and reported
numerical candidates in dimensions $2\otimes3$ and $2\otimes4$
\cite[Remark following Corollary~11]{SC}.  We determine the minimum
von Neumann entropy rigorously in every qubit--qudit dimension.  The $2\otimes3$
candidate is confirmed; for $2\otimes4$, the minimizing spectrum is
instead
$$
\frac{1}{14}
(\underbrace{3,\ldots,3}_{3},\underbrace{1,\ldots,1}_{5}).
$$

For the uniform measure on the spectral simplex used below, Slater
\cite[Eqs.~(27) and (30)]{Sl} evaluated the two-qubit absolutely separable
volume exactly.  Writing $a_n$ for the relative spectral volume of the
qubit--qudit APPT set, we derive explicit lower and upper bounds valid for
every $n\ge2$ and determine its exact exponential decay rate.

Ahiable, Kothakonda and Winter
\cite[Theorem 6.10 and Proposition 6.14]{AKW} computed the minimum von Neumann entropy
and the exact relative volume of an inner polytope, denoted $\CP_{m,n}$ in their
work.  Their
comparison with the full APPT volume uses Monte Carlo estimates and
log-linear regression \cite[Figures~10--11 and Table~1]{AKW}; for $m=2$,
they report the fitted base-$10$ slope $-0.7625$ over the tested range.  We
determine the qubit--qudit entropy minimum without the inner polytope.  For
volume, the exact inner- and outer-polytope formulas give, respectively,
rigorous lower and upper bounds for $a_n$.  The ratio of the upper bound to
the lower bound is less than $4$ and tends to $3$.  Thus these bounds
determine the exact exponential decay rate $\ln(27/4)$.

In addition to the three main theorems below, Proposition
\ref{leading-sums-prop} gives sharp leading-partial-sum bounds on the outer
polytope for $1\le k\le mn-3$; these bounds are attained by qubit--qudit APPT
spectra.  These partial sums are the Ky Fan eigenvalue sums and quantify how
strongly a spectrum is concentrated in its largest entries.

We first maximize purity over the outer spectral polytope.  For $mn\ge8$,
the following result is the sharpest purity bound that follows from
$\lambda_1\le\lambda_{mn-2}+\lambda_{mn-1}+\lambda_{mn}$ together with
ordering and normalization.  The remaining total dimension $mn=6$ in the
theorem is supplied by Song--Chen, as explained in the proof.

\begin{thm}\label{main-thm}
Let $2\le m\le n$ and $mn\ge6$. The purity of any absolutely PPT
(APPT) state $\rho$ on $\BC^m\otimes\BC^n$ satisfies
$$
\Tr(\rho^2)\le\begin{cases}
\displaystyle\frac{4}{3mn} & \text{if $mn \equiv 0 \pmod{4}$}, \medskip \\

	                 \displaystyle\frac{4(3mn-2)}{(3mn-1)^2} & \text{if $mn \equiv 1 \pmod{4}$}, \medskip \\
	                 
	                 \displaystyle\frac{4(3mn + 4)}{(3mn+2)^2}  & \text{if $mn \equiv 2 \pmod{4}$}, \medskip\\
	                 
	               \displaystyle\frac{4(3mn + 2)}{(3mn+1)^2} & \text{if $mn \equiv 3 \pmod{4}$}. 
\end{cases}
$$
Moreover, equality is attained when $m=2$ and $n\ge3$, whereas the inequality
is strict when $m=3$ and $n\ge3$.
In particular, the displayed upper bound is strictly smaller than $2/(mn)$
for every $mn\ge6$ and is asymptotic to $4/(3mn)$ as
$mn\longrightarrow\infty$.
\end{thm}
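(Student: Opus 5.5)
The plan is to reduce the purity bound to a finite optimization over the vertices of the outer spectral polytope. Set $N=mn$. Every ordered APPT spectrum lies in the polytope $\CO_N$ cut out by
$$\lambda_1\ge\cdots\ge\lambda_N\ge0,\qquad \sum_i\lambda_i=1,\qquad \lambda_1\le\lambda_{N-2}+\lambda_{N-1}+\lambda_N.$$
Purity $\sum\lambda_i^2$ is strictly convex, so its maximum over $\CO_N$ is attained only at vertices. It therefore suffices to classify the vertices and compare their purities.

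\textbf{Step 1: classify the vertices.} The polytope $\CO_N$ has dimension $N-1$. A vertex must make $N-1$ of the $N+1$ inequalities tight, namely the $N-1$ ordering inequalities, $\lambda_N\ge0$, and the Hildebrand-type inequality. So at most two of them are slack.
\begin{itemize}
\item If the Hildebrand-type inequality is slack, every other inequality is tight. This gives $\lambda_N=0$ and hence the zero vector, which is impossible.
\item So the Hildebrand-type inequality is tight, and exactly two inequalities among the ordering constraints and $\lambda_N\ge0$ may be slack. Hence the spectrum has at most three distinct values, counting $0$.
\end{itemize}
Running through these cases should leave three families:
\begin{itemize}
\item $v_k\propto(\underbrace{3,\ldots,3}_{k},\underbrace{1,\ldots,1}_{N-k})$ for $1\le k\le N-3$;
\item $(\underbrace{2,\ldots,2}_{k},1,1,0)$, up to normalization;
\item the flat spectrum $(\underbrace{1,\ldots,1}_{N-2},0,0)$, together with its other degenerate relatives that contain zeros.
\end{itemize}
I would write this case analysis out carefully, since the rest of the proof rests on it.

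\textbf{Step 2: optimize over the vertices.} For $v_k$ the purity is
$$f(k)=\frac{8k+N}{(2k+N)^2}.$$
Its continuous derivative is proportional to $4N-16k$, so the maximum over real $k$ is at $k=N/4$, with value $4/(3N)$. Over integers, the maximizer is the integer nearest $N/4$ on the appropriate side, depending on $N\bmod 4$. Substituting should give the four displayed formulas. For example, when $N\equiv2\pmod 4$ the two candidates $k=(N\pm2)/4$ must be compared, and the larger one gives $4(3N+4)/(3N+2)^2$.

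Next I would check that the families containing zeros have strictly smaller purity. For instance, $(2,\ldots,2,1,1,0)$ has purity $(4k+2)/(2k+2)^2\approx1/(2k)$ with $k\approx N-3$, which is much smaller than $4/(3N)$ for $N\ge8$. The case $N=6$ forces $m=2$, $n=3$, and there I would simply quote Song--Chen~\cite{SC}, as the statement indicates.

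The remaining claims are then direct:
\begin{itemize}
\item The comparison with $2/(mn)$ and the asymptotic $4/(3mn)$ follow from elementary inequalities on these rational expressions.
\item For $m=2$, equality holds because at the maximizing vertex $v_k$ the last three entries are equal. Then $2\sqrt{\lambda_{N-2}\lambda_N}=\lambda_{N-2}+\lambda_N$, so inequality \eqref{ineq} holds with equality. By Hildebrand's criterion \cite{Hi}, \eqref{ineq} alone characterizes APPT when $m=2$, so the vertex is attained by an APPT state. This also recovers the Song--Chen values for $n\ge4$.
\end{itemize}

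\textbf{Main obstacle: strictness for $m=3$.} The maximizer is unique, by strict convexity together with uniqueness of the best $k$; for $N\equiv2\pmod4$ this uniqueness has to be checked explicitly. Since the maximum over the closed polytope is attained only at this vertex, strict inequality for APPT states follows once I show that this spectrum violates the full Hildebrand criterion for $3\otimes n$. A continuity argument on the compact APPT spectral set then finishes.

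The hard part is to exhibit, for the spectrum $(3^{(k)},1^{(3n-k)})$ with $k\approx 3n/4$, one explicit $3\times3$ Hildebrand matrix that is not positive semidefinite. I would first try the matrix built from the pairing of the largest eigenvalues against the smallest ones, where the extra off-diagonal entries present for $m=3$ involve $\lambda_{N-4}$ and $\lambda_{N-3}$. Because those entries are equal to $1$ in this spectrum rather than being free, I expect a principal $2\times2$ or $3\times3$ minor to turn negative. The risk is that this particular matrix stays PSD, in which case I would have to search among Hildebrand's other admissible pairings.
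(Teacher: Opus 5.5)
Your overall route is the paper's: bound purity by its maximum over the outer polytope $\CS_{mn}$ via vertex enumeration, optimize the $(3,\dots,3,1,\dots,1)$ family, quote Song--Chen at $mn=6$, use equality in \eqref{ineq} for $m=2$, and combine one Hildebrand qutrit matrix with uniqueness and compactness for $m=3$. However, Step~1 and the comparison in Step~2 are wrong as written.

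Your constraint count is off. If $\lambda_1\le\lambda_{N-2}+\lambda_{N-1}+\lambda_N$ is slack, one of the other $N$ constraints may still be slack. This produces the vertices $\frac1{N-1}(1,\dots,1,0)$ and $\frac1N(1,\dots,1)$, which is harmless for purity.

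More seriously, the second family is $\frac1{N-1+i}(\underbrace{2,\dots,2}_{i},\underbrace{1,\dots,1}_{N-1-i},0)$ for \emph{every} $1\le i\le N-3$, not just $i=N-3$. Its purity $(N-1+3i)/(N-1+i)^2$ peaks near $i=(N-1)/3$ at about $9/(8(N-1))$. That is roughly $27/32$ of $4/(3N)$, so ``much smaller'' is false. Even your member $i=N-3$ has purity about $1/N$, not $1/(2N)$. You need an actual inequality; using $|N-4i_\star|\le2$, for instance,
\[
f(i_\star)=\frac4{3N}-\frac{(N-4i_\star)^2}{3N(N+2i_\star)^2}
>\frac{4(N^2-1)}{3N^3}>\frac9{8(N-1)}\qquad(N\ge8).
\]
Also, the zero-containing vertex $\frac16(1,\dots,1,0,0)$ ties with the type-$3$ maximum $1/6$ when $N=8$, so ``strictly smaller'' fails there. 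This does not affect the bound. The strict inequality $1/(N-2)<f(i_\star)$ does hold for $N\ge9$, which is all that uniqueness for $m=3$ requires.

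The second gap is the decisive one. Strictness for $m=3$, the part that disproves the D\~ung--Kh\^oi conjecture, is left as an expectation. The matrix that works is the one in Hildebrand's qutrit criterion \cite[Corollary~V.3]{Hi}, which requires
\[
K(\pmb\lambda)=\begin{pmatrix}
2\lambda_N&\lambda_{N-1}-\lambda_1&\lambda_{N-3}-\lambda_2\\
\lambda_{N-1}-\lambda_1&2\lambda_{N-2}&\lambda_{N-4}-\lambda_3\\
\lambda_{N-3}-\lambda_2&\lambda_{N-4}-\lambda_3&2\lambda_{N-5}
\end{pmatrix}\succeq0.
\]
For $N=3n\ge9$ the maximizer has $i_\star\ge2$ and $N-i_\star\ge6$. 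So $\lambda_1=\lambda_2=3/(N+2i_\star)$, $\lambda_{N-5}=\dots=\lambda_N=1/(N+2i_\star)$, and $\lambda_3=1/(N+2i_\star)$ exactly when $i_\star=2$. Hence
\[
(N+2i_\star)K(\pmb\lambda^*)=\begin{pmatrix}2&-2&-2\\-2&2&\varepsilon\\-2&\varepsilon&2\end{pmatrix},
\qquad
\varepsilon=\begin{cases}0,&i_\star=2,\\-2,&i_\star\ge3,\end{cases}
\]
with determinant $-2(\varepsilon-2)^2$, i.e.\ $-8$ or $-32$. Every $2\times2$ principal minor here is nonnegative ($0$, $0$, and $4-\varepsilon^2$), so your hope that a $2\times2$ minor turns negative would fail. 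Only the full $3\times3$ determinant detects the violation. With this computation, your uniqueness-plus-compactness argument finishes the proof exactly as in the paper.
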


\begin{remark}
For qutrit--qudit systems ($m=3$), D\~ung and Kh\^oi conjectured that the
expression appearing in Theorem~\ref{main-thm} is the exact maximum APPT
purity and specified the same spectra as the outer-polytope maximizers
described in Theorem~\ref{linear}
\cite[Conjecture~3.2]{DK}.  In each such dimension, the conjectured spectrum
is the unique purity maximizer over the outer polytope, but it is not APPT.
Hence the inequality in
Theorem~\ref{main-thm} is strict for every $3\otimes n$ system with $n\ge3$,
and the D\~ung--Kh\^oi conjecture is false in those dimensions.  Recent work of
Wang, Chen and Song \cite{WCS} gives a detailed classification of full-rank
$3\otimes3$ APPT extreme points having exactly three distinct eigenvalues,
but does not settle the global optimization problem.  Determining the exact
maximum APPT purity remains open.
\end{remark}

We next state the entropy and volume consequences for qubit--qudit
systems.  In this paper, entropy is measured in bits.  For a state $\rho$ and a
probability vector $\pmb\lambda$, set
$$
S(\rho)=-\Tr(\rho\log_2\rho),
\qquad
H(\pmb\lambda)=-\sum_j\lambda_j\log_2\lambda_j,
$$
with the convention $0\log_2 0=0$.  If $\pmb\lambda(\rho)$ denotes the
eigenvalue vector of $\rho$, then
$S(\rho)=H(\pmb\lambda(\rho))$.  Thus $S$ denotes entropy on states, whereas
$H$ denotes the corresponding spectral entropy.  Natural logarithms, denoted
by $\ln$, are used in intermediate estimates and for exponential decay rates.
For $n\ge4$, let $r_n$ be the unique integer in
$$
\left\{\left\lfloor\frac n2(3\ln3-2)\right\rfloor,
\left\lceil\frac n2(3\ln3-2)\right\rceil\right\}
$$
that minimizes
\[
h_n(r)=\log_2(2n+2r)-\frac{3r}{2n+2r}\log_2 3.
\]

Let $\Lambda^{\mathrm{APPT}}_{m,n}$ and
$\Lambda^{\mathrm{ASEP}}_{m,n}$ denote the permutation-invariant sets
of spectra of APPT and absolutely separable states, respectively; a
superscript $\downarrow$ denotes the corresponding subset of decreasingly
ordered spectra.

\begin{thm}\label{entropy-main-thm}
For every $n\ge2$, the minimum von Neumann entropy of APPT states on
$\BC^2\otimes\BC^n$ is
$$
\min_{\substack{\rho\text{ APPT on}\\
\BC^2\otimes\BC^n}}S(\rho)
=
\begin{cases}
\log_2 3,&n=2,\\[1mm]
\displaystyle
\log_2(14+8\sqrt2)
-\frac{4(3+2\sqrt2)}{14+8\sqrt2}\log_2(3+2\sqrt2),&n=3,\\[2mm]
h_n(r_n),&n\ge4.
\end{cases}
$$
The unique decreasingly ordered minimizing eigenvalue vectors in the three
cases are, respectively,
$$
\frac13\bigl(\underbrace{1,\ldots,1}_{3},0\bigr),\qquad
\frac{1}{14+8\sqrt2}
\bigl(\underbrace{3+2\sqrt2,\ldots,3+2\sqrt2}_{4},
\underbrace{1,\ldots,1}_{2}\bigr),
$$
and, for $n\ge4$,
$$
\frac{1}{2n+2r_n}
\Big(\underbrace{3,\ldots,3}_{r_n},
\underbrace{1,\ldots,1}_{2n-r_n}\Big).
$$
Equivalently, in each case the minimizing eigenvalue vector is unique up to
permutation.
The same statement holds for absolutely separable states.
\end{thm}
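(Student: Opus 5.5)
Since entropy is concave and $m=2$ makes Hildebrand's inequality \eqref{ineq} necessary and sufficient, the plan is to minimize $H$ over the APPT spectral set and reduce that problem to its extreme points. Johnston's theorem identifies APPT with absolute separability for $m=2$, so the final sentence of Theorem~\ref{entropy-main-thm} follows automatically.

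For $n\ge4$ I would avoid the nonconvex-looking inequality \eqref{ineq} and work with the outer polytope $\lambda_1\le\lambda_{2n-2}+\lambda_{2n-1}+\lambda_{2n}$. A concave function attains its minimum over a polytope at a vertex, so I would use the vertex classification of the outer polytope proved earlier in the paper (used for Theorem~\ref{linear}). I expect the vertices to have the form $\frac{1}{N}(3,\dots,3,1,\dots,1)$ in a suitable sense, with the last three entries tying, and possibly some degenerate variants with trailing zeros excluded by the inequality. For the vector with $r$ entries equal to $3$ and $2n-r$ entries equal to $1$, the entropy is
\[
\log_2(2n+2r)-\frac{3r}{2n+2r}\log_2 3=h_n(r),
\]
so the outer minimum is $\min_r h_n(r)$ over the admissible range of $r$ (at least $r\le 2n-3$).

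I would then treat $r$ as a real variable. Differentiating with $x=r/n$ should give a unique critical point at $x=(3\ln3-2)/2$ with $h_n$ unimodal, so the integer minimizer lies among the floor and ceiling. Uniqueness of $r_n$ would require ruling out ties, which I expect to follow from irrationality: equality of $h_n$ at two consecutive integers would force a relation between $\log 3$ and rationals. The next step is to check that the minimizing vertex is actually APPT. For $(3,\dots,3,1,1,1)$-type vectors with $r\le 2n-3$, the tail entries are all equal to $1$, so $\lambda_{2n-1}+2\sqrt{\lambda_{2n-2}\lambda_{2n}}=3=\lambda_1$ and \eqref{ineq} holds with equality. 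Because the APPT set lies inside the outer polytope, the outer minimum is then the true minimum, which handles $n\ge4$. Uniqueness up to permutation comes from strict concavity: any other minimizer would be a vertex with the same $H$, and I would exclude that by comparing the other vertex types.

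The small cases $n=2,3$ are different because the outer-polytope minimizer may fail to be APPT, or may not be the right candidate. Here I would instead use the Song--Chen extreme-point classification \cite[Theorems~9 and~10]{SC}. The APPT spectral set is convex by Hildebrand's LMI criterion, so the minimum of the concave $H$ lies at an extreme point. I would parametrize the Song--Chen extreme families, which satisfy \eqref{ineq} with equality, and minimize over them. For $n=3$ this should give the one-parameter curve $(a,a,a,a,b,b)$ with $a=(\sqrt b+\sqrt b)^2$-type constraint; the tightness condition yields $a=(1+\sqrt2)^2 b=(3+2\sqrt2)b$. For $n=2$ it gives $(1,1,1,0)/3$.

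I expect the main obstacle to be the vertex classification together with the comparison of all vertex types, including those with zeros or other block patterns. These must be shown to have entropy larger than $h_n(r_n)$ for every $n\ge4$, and the boundary $n=4$ may require a direct numerical check.
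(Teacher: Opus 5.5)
\textbf{There is a genuine gap: for $4\le n\le 7$ the outer polytope does not give the APPT minimum.} Your large-$n$ argument is the paper's, but it only works for $2n\ge 15$, i.e.\ $n\ge 8$. The zero-tailed vertices are not excluded by $\lambda_1\le\lambda_{2n-2}+\lambda_{2n-1}+\lambda_{2n}$. Besides the $(3,\dots,3,1,\dots,1)$ vertices, $\CS_{2n}$ has the vertices $\frac1{2n-2}(1,\ldots,1,0,0)$, $\frac1{2n-1}(1,\ldots,1,0)$, the uniform vector, and $\frac1{2n-1+i}(2,\ldots,2,1,\ldots,1,0)$.

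For $2n\le 14$ the first of these minimizes entropy over $\CS_{2n}$, with value $\log_2(2n-2)<h_n(r_n)$. For $n=4$ this is $\log_2 6\approx 2.585$, against $h_4(3)\approx 2.788$; even the type-2 vertex $\frac1{10}(2,2,2,1,1,1,1,0)$, with entropy $\approx 2.722$, beats $h_4(3)$. None of these vertices is APPT, because $\lambda_{2n-1}+2\sqrt{\lambda_{2n-2}\lambda_{2n}}<\lambda_1$ violates \eqref{ineq}. So your planned step, ``the other vertex types have entropy larger than $h_n(r_n)$ for every $n\ge4$,'' is false for $n=4,5,6,7$. This is not a boundary issue settled by one numerical check at $n=4$: the relaxation is not tight in these four dimensions.

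\textbf{What is needed for $2\le n\le 7$.} All of $2\le n\le 7$, not only $n=2,3$, must be handled with the exact, non-polyhedral APPT set. The paper does this through the Song--Chen extreme-point classification. The step missing from your $n=2,3$ sketch is the same one needed here.

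Concavity only places the minimizer among the extreme points, and these form one-parameter curves (the families $\mathcal A$, $\mathcal B_k$, $\mathcal C_k$ of Appendix~A). ``Minimizing over them'' therefore requires proving that the entropy has no interior minimum along each curve. The paper shows each is monotone or increases and then decreases, which reduces everything to the endpoints $U_d$, $V_d$, $T_{d,k}$.

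Only after that is the problem a finite comparison. Two features make that comparison delicate:
\begin{itemize}
\item The non-vertex APPT spectra $U_d$ and $V_d$ are real competitors; they are the minimizers for $d=4$ and $d=6$.
\item The gaps are tiny: about $4\times10^{-3}$ bits between $T_{8,3}$ and $T_{8,2}$, and about $5\times10^{-4}$ between $T_{14,5}$ and $T_{14,4}$.
\end{itemize}
Hence the comparisons need rigorous enclosures, such as the paper's rational-interval certificates, rather than an informal numerical check.
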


\begin{thm}[Exponential decay]
\label{volume-main-thm}
For every $n\ge2$, let $a_n$ denote the relative Euclidean volume of
$\Lambda^{\mathrm{APPT}}_{2,n}$ in the probability simplex
$\Delta_{2n-1}$. Set
$$
p_n=\frac{n!(2n-2)!}{(3n-2)!},\qquad
q_n=\frac{n(2n-1)}{n-1}
\bigl[\BetaFn(n,2n-2)-\BetaFn(2n-1,2n-2)\bigr],
$$
where $\BetaFn$ is the beta function, and let $c_n=q_n/p_n$. Then
$$
p_n\le a_n\le q_n=c_np_n<4p_n\quad(n\ge2),
\qquad \lim_{n\to\infty}c_n=3.
$$
Consequently,
$a_n=\Theta\!\left(\sqrt n(4/27)^n\right)$, and its exact exponential decay
rate is
$$
\lim_{n\to\infty}-\frac1n\ln a_n=\ln\frac{27}{4}.
$$
Since the APPT and absolutely separable spectral sets coincide for
$2\otimes n$, the same conclusions hold for absolutely separable spectra.
\end{thm}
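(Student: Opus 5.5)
The plan is to turn each relative volume into a probability and evaluate it exactly with exponential order statistics. Write $N=2n$. For $m=2$, inequality \eqref{ineq} characterizes APPT spectra. Since $\lambda_{N-2}\ge\lambda_N$, we have $2\sqrt{\lambda_{N-2}\lambda_N}\ge2\lambda_N$, so the inner polytope
$$
\lambda_1\le\lambda_{N-1}+2\lambda_N
$$
lies in $\Lambda^{\mathrm{APPT},\downarrow}_{2,n}$. By the AM--GM step already noted, the set $\Lambda^{\mathrm{APPT},\downarrow}_{2,n}$ lies in the outer polytope
$$
\lambda_1\le\lambda_{N-2}+\lambda_{N-1}+\lambda_N .
$$
All three sets are permutation invariant after symmetrization. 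Hence each relative volume equals the probability that a uniformly distributed point of $\Delta_{N-1}$, after sorting, satisfies the corresponding inequality. This gives $P_{\rm in}\le a_n\le P_{\rm out}$, and it remains to show $P_{\rm in}=p_n$ and $P_{\rm out}=q_n$.

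\textbf{Exponential representation.} A uniform point of the simplex is $(E_1,\dots,E_N)/\sum E_i$ with $E_i$ i.i.d.\ $\mathrm{Exp}(1)$. Both conditions are homogeneous, so the normalization can be dropped and the conditions tested on the order statistics $X_{(1)}\le\dots\le X_{(N)}$ of the $E_i$. Rényi's representation gives
$$
X_{(k)}=\sum_{j\le k}\frac{Z_j}{N-j+1}
$$
with independent $Z_j\sim\mathrm{Exp}(1)$.

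\textbf{Inner probability.} The condition is $X_{(N)}-X_{(2)}\le 2X_{(1)}$. By memorylessness, $X_{(N)}-X_{(2)}$ is distributed as the maximum of $N-2$ i.i.d.\ exponentials and is independent of $X_{(1)}=Z_1/N$. Therefore
$$
P_{\rm in}=\int_0^\infty e^{-z}\bigl(1-e^{-2z/N}\bigr)^{N-2}\,dz .
$$
The substitution $u=e^{-2z/N}=e^{-z/n}$ turns this into $n\,\BetaFn(n,2n-1)=n!(2n-2)!/(3n-2)!=p_n$.

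\textbf{Outer probability.} Here $X_{(N)}-X_{(3)}$ is the maximum of $N-3$ exponentials, independent of
$$
W=X_{(1)}+X_{(2)}=\frac{2Z_1}{N}+\frac{Z_2}{N-1}.
$$
So $P_{\rm out}=\mathbb E\bigl[(1-e^{-W})^{N-3}\bigr]$. Integrating out $Z_2$ first gives a factor of partial-fraction type, roughly $(N-1)\int_0^\infty e^{-(N-1)w}\cdots$. The remaining integral should then reduce, again with $u=e^{-w}$, to the difference of two beta integrals. This is the expected source of $q_n=\frac{n(2n-1)}{n-1}\bigl[\BetaFn(n,2n-2)-\BetaFn(2n-1,2n-2)\bigr]$. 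The main bookkeeping risk is getting the convolution of the two exponential rates $n$ and $2n-1$ right, which produces the prefactor $\frac{n(2n-1)}{n-1}$.

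\textbf{Asymptotics.} Write everything in Gamma functions:
$$
c_n=\frac{q_n}{p_n}=\frac{(2n-1)(3n-2)}{(n-1)(2n-2)}\left(1-\frac{\Gamma(2n-1)\Gamma(3n-2)}{\Gamma(n)\Gamma(4n-3)}\right).
$$
The subtracted ratio decays exponentially, and the prefactor tends to $6/2=3$, so $c_n\to3$. For $c_n<4$ I would check $n=2,3,4$ directly and, for larger $n$, bound the prefactor by $4$. I expect this elementary inequality, valid uniformly for all $n\ge2$, to be the fiddliest step. Finally, Stirling's formula applied to $p_n=n!(2n-2)!/(3n-2)!$ gives $p_n\sim C\sqrt n\,(4/27)^n$. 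Combined with $p_n\le a_n<4p_n$, this yields $a_n=\Theta(\sqrt n(4/27)^n)$ and the decay rate $\ln(27/4)$. The statement for absolutely separable spectra then follows from Johnston's identification of the two sets.
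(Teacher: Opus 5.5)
Your proposal is correct, but you obtain the two polytope volumes by a different route from the paper. The paper does not compute $p_n$; it takes that formula from Ahiable--Kothakonda--Winter. It computes $q_n=R_{2n}$ from Lasserre's simplex-section formula: it maps the ordered cone $\CO_d$ affinely onto a simplex via $x_i=i(\lambda_i-\lambda_{i+1})$ and keeps the two terms with positive coefficients $g_{d-1},g_d$. You instead use the R\'enyi representation to reduce each volume to a single beta integral.

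The outer step you left as a sketch does close exactly. $W=Z_1/n+Z_2/(2n-1)$ has density $\frac{n(2n-1)}{n-1}\bigl(e^{-nw}-e^{-(2n-1)w}\bigr)$, and the substitution $u=e^{-w}$ turns $\mathbb E\bigl[(1-e^{-W})^{2n-3}\bigr]$ into $q_n$.

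Your argument buys self-containedness: it rederives the inner-polytope formula instead of citing it, and it explains where the beta functions and the prefactor $n(2n-1)/(n-1)$ come from. Underneath, the two computations are closely related. Under the uniform measure, with $d=2n$, the paper's coordinates are exactly $x_i=Z_{d-i+1}/\sum_j Z_j$. So Lasserre's formula amounts to the distribution of a linear combination of i.i.d.\ exponentials. Its advantage is that it applies verbatim to any single linear constraint on the ordered spectrum. Your max-of-exponentials factorization instead exploits the specific ``top gap versus bottom entries'' shape of the inner and outer constraints.

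The rest of your plan matches the paper: the formula for $c_n$, the checks at small $n$, the bound $3+(5n-4)/(2(n-1)^2)<4$ for $n\ge4$, and Stirling. Using $a_n\le q_n<4p_n$ instead of a separate asymptotic for $q_n$ is a harmless shortcut.
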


\begin{remark}[Scope of the entropy and volume bounds]
For $m\ge3$, the outer-polytope results give an entropy lower bound and a
relative spectral-volume upper bound, but not the exact APPT quantities.
\end{remark}

\section{The outer spectral polytope: purity and leading partial sums}

For any integer $d \ge 4$, let $\CS_d$ denote the outer spectral
polytope consisting of real sequences
$\pmb{\lambda}=(\lambda_1,\lambda_2,\ldots,\lambda_d)$ such that
$\lambda_1\ge\lambda_2\ge\cdots\ge\lambda_d\ge0$,
$\sum_{i=1}^d\lambda_i=1$, and
$\lambda_1\le\lambda_{d-2}+\lambda_{d-1}+\lambda_d$.

Define $F$ on $\CS_d$ by
$F(\pmb{\lambda})=\sum_{i=1}^{d}\lambda_i^2$.  We maximize
$F(\pmb{\lambda})$ over $\CS_d$.  Since $F$ is strictly convex and
$\CS_d$ is a compact convex polytope, the maximum exists.  Moreover,
every maximizer is a vertex: if a maximizer were a nontrivial convex
combination of two distinct points of $\CS_d$, strict convexity would
force one of those points to have a larger value of $F$.

\begin{lemma}[Vertices of the outer polytope]
\label{outer-vertices-lemma}
Let $d \ge 4$. The vertices of $\CS_{d}$ are given as follows.
\begin{itemize}
\item Type $1$ vertices:
$\frac1i\bigl(\underbrace{1,\ldots,1}_{i},
\underbrace{0,\ldots,0}_{d-i}\bigr)$, where
$d-2 \le i \le d$. \smallskip

\item Type $2$ vertices:
$\frac1{d-1+i}\bigl(\underbrace{2,\ldots,2}_{i},
\underbrace{1,\ldots,1}_{d-1-i},0\bigr)$, where
$1 \le i \le d-3$. \smallskip

\item Type $3$ vertices:
$\frac1{d+2i}\bigl(\underbrace{3,\ldots,3}_{i},
\underbrace{1,\ldots,1}_{d-i}\bigr)$, where
$1 \le i \le d-3$.
\end{itemize} 
\end{lemma}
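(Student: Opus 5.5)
We work in gap coordinates. Set $x_j=\lambda_j-\lambda_{j+1}\ge0$ for $1\le j\le d-1$ and $x_d=\lambda_d\ge0$, so that $\lambda_i=\sum_{j\ge i}x_j$. In these coordinates $\CS_d$ becomes the polytope in $\mathbb R^d_{\ge0}$ cut out by two constraints. The first is the normalization $\sum_j j\,x_j=1$. The second is the inequality $\lambda_1\le\lambda_{d-2}+\lambda_{d-1}+\lambda_d$, which reads
$$\sum_{j=1}^{d-3}x_j\le x_{d-2}+2x_{d-1}+3x_d .$$
Hence $\CS_d$ is defined by $d$ nonnegativity constraints, one linear inequality $g(x)\ge0$, and one equation, in dimension $d-1$.

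A vertex is a point at which $d-1$ linearly independent active constraints hold, in addition to the equation. Two cases arise.

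\textbf{Case A: $g$ is not active.} Then $d-1$ of the $x_j$ vanish, so exactly one $x_k$ is positive. This point is $\frac1k(1,\dots,1,0,\dots,0)$. The constraint $g\ge0$ requires $k\ge d-2$: if $k\le d-3$, the left side of $g$ is $x_k>0$ while the right side is $0$. The values $k=d-2,d-1,d$ satisfy $g\ge0$, and they give the Type 1 vertices.

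\textbf{Case B: $g$ is active.} Then $d-2$ of the $x_j$ vanish, so at most two coordinates are nonzero. The support cannot lie entirely on one side of $g$, since equality in $g$ would then force that support to vanish. The point $x=0$ is excluded by normalization. We check whether a single nonzero coordinate with $g=0$ can occur. It cannot: $g=0$ with one nonzero term forces that term to vanish. Therefore exactly two coordinates are nonzero: some $x_k$ with $k\le d-3$, together with $x_\ell$ for $\ell\in\{d-2,d-1,d\}$. The equation $x_k=c_\ell x_\ell$, with $c_\ell=\ell-d+3$, then fixes the ratio, and normalization fixes the scale. The three choices give the following:
\begin{itemize}
\item $\ell=d$ gives $\lambda=(3,\dots,3,1,\dots,1)$ with $k$ threes, which is Type 3.
\item $\ell=d-1$ gives $(2,\dots,2,1,\dots,1,0)$ with $k$ twos, which is Type 2.
\item $\ell=d-2$ gives $(1,\dots,1,0,0,0)$ plus $(1,\dots,1,0,0,0)$ on $k$ coordinates, i.e.\ $(2,\dots,2,1,\dots,1,0,0)$.
\end{itemize}

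The third case needs a recheck. Take $x_k=x_{d-2}$, so that $\lambda_i=x_k+x_{d-2}$ for $i\le k$, $\lambda_i=x_{d-2}$ for $k<i\le d-2$, and $\lambda_{d-1}=\lambda_d=0$. Then $\lambda_1=2x$, while $\lambda_{d-2}+\lambda_{d-1}+\lambda_d=x$. This contradicts $g=0$. The discrepancy shows that the $\lambda_1$ coefficients must be recomputed carefully: $\lambda_1=\sum_{j}x_j$ includes all gaps, so the correct form of $g$ is
$$\sum_{j=1}^{d-3}x_j\le x_{d-1}+2x_d .$$
Here $x_{d-2}$ cancels, since $\lambda_1-\lambda_{d-2}=\sum_{j\le d-3}x_j$ and $\lambda_{d-1}+\lambda_d=x_{d-1}+2x_d$.

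With this correction, $x_{d-2}$ does not appear in $g$. Redo Case B accordingly. The equality $g=0$ needs a left term $x_k$ with $k\le d-3$ and a right term, $x_{d-1}$ with $x_k=x_{d-1}$ (giving Type 2) or $x_d$ with $x_k=2x_d$ (giving Type 3). Alternatively, the support of the equality may be empty on both sides. Then the only nonzero coordinate is $x_{d-2}$, which yields the Type 1 point with $i=d-2$; that point was already found in Case A, and it is degenerate but not new.

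Finally, we check that every listed point is indeed a vertex, that is, that its active constraints have full rank. This holds because the support has size at most two and the constraints $\{x_j=0\}_{j\notin\mathrm{supp}}$, together with $g=0$ when $g$ is active, determine the point uniquely, given normalization. The main obstacle is precisely the bookkeeping of the coefficients of $g$ in gap coordinates, as the slip above shows. Once that is right, the support-size counting is routine.
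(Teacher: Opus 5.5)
Your argument, once $g$ is corrected to $\sum_{j\le d-3}x_j\le x_{d-1}+2x_d$, is correct and is essentially the paper's proof. The paper uses the rescaled gaps $x_i=i(\lambda_i-\lambda_{i+1})$, which make $\CO_d$ a simplex with vertices $\pmb\nu_i$ and give $g$-coefficients $-1/i$, $0$, $1/(d-1)$, $2/d$, with the same sign pattern as your $-1,0,1,2$; it then shows by a perturbation argument, rather than your active-constraint count, that a new vertex has exactly two positive coordinates with opposite-sign coefficients. In a final write-up, delete the erroneous first form of $g$ and the computations built on it (under the correct form a lone nonzero $x_{d-2}$ does satisfy $g=0$), and place $\pmb\nu_{d-2}$, where $g$ is active, in Case B rather than Case A.
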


\begin{proof}
Let $\CO_d$ be the set of decreasing probability vectors in
$\mathbb R^d$.  To see its simplex structure explicitly, set
$\lambda_{d+1}=0$ and
$$
x_i=i(\lambda_i-\lambda_{i+1}),\qquad 1\le i\le d.
$$
Then $x_i\ge0$, $\sum_i x_i=1$, and
$$
\pmb\lambda=\sum_{i=1}^d x_i\pmb\nu_i,
\qquad
\pmb{\nu}_i = \frac1i\bigl(\underbrace{1,\ldots,1}_{i},
\underbrace{0,\ldots,0}_{d-i}\bigr),\qquad 1\le i\le d.
$$
Thus $\CO_d=\operatorname{conv}\{\pmb\nu_1,\ldots,\pmb\nu_d\}$ is a
$(d-1)$-simplex.  In particular, each segment
$\overline{\pmb\nu_i\pmb\nu_j}$ with $i\ne j$ is an edge.

Let $G(\pmb{\lambda}) = \lambda_{d-2} + \lambda_{d-1} +\lambda_{d} - \lambda_1$. Then $\CS_{d}=\{ \pmb{\lambda} \in \CO_{d} \mid G(\pmb{\lambda}) \ge 0\}$. 
Write $g_i=G(\pmb\nu_i)$.  Then
$$
g_i = \begin{cases}
	  - \displaystyle\frac{1}{i} & \text{$1 \le i \le d-3$}, \smallskip \\
	  
            0 & \text{if $i=d-2$},  \smallskip \\
            
            \displaystyle\frac{1}{d-1}  & \text{if $i = d-1$}, \smallskip \\
            
            \displaystyle\frac{2}{d}  & \text{if $i = d$}.
\end{cases}
$$
Hence $\pmb{\nu}_i \in \CS_{d}$ if and only if $d-2 \le i \le d$.

We now determine the vertices of $\CS_d$ that are not vertices of
$\CO_d$.  In the coordinates $x_i$ above, the additional condition is
$\sum_i g_i x_i\ge0$.  A new vertex must satisfy
$\sum_i g_i x_i=0$.  If three or more of its coordinates $x_i$ were positive,
there would be a nonzero vector $\pmb h$ supported on those coordinates such that
$$
\sum_i h_i=0,\qquad \sum_i g_i h_i=0.
$$
For sufficiently small $\varepsilon>0$, both $\pmb x+\varepsilon\pmb h$ and
$\pmb x-\varepsilon\pmb h$ would remain nonnegative and satisfy the two
equalities.  The corresponding point of $\CS_d$ would then be the midpoint of
two distinct points of $\CS_d$, contrary to its being a vertex.  Thus a new
vertex cannot have three or more positive coordinates.  A point with only one
positive coordinate is an original simplex vertex, so a genuinely new vertex
has exactly two positive coordinates.  Their $g_i$-values must have
opposite signs, so every new vertex has the form
$$
\pmb{\lambda}=t\pmb{\nu}_i+(1-t)\pmb{\nu}_j,
\qquad 0<t<1,
$$
where $1\le i\le d-3$ and $d-1\le j\le d$. By linearity of $G$,
$$
G(\pmb{\lambda}) = \begin{cases}
            \displaystyle\frac{1-t}{d-1} - \frac{t}{i}
            & \text{if $j = d-1$}, \smallskip \\
            
            \displaystyle\frac{2(1-t)}{d} - \frac{t}{i}
            & \text{if $j = d$}.
\end{cases}
$$
Solving $G(\pmb\lambda)=0$ gives
$$
t = \begin{cases}
            \displaystyle\frac{i}{d-1+i}  & \text{if $j = d-1$}, \smallskip \\
            
            \displaystyle\frac{2i}{d+2i}    & \text{if $j = d$}.
\end{cases}
$$
Substitution into $\pmb\lambda=t\pmb\nu_i+(1-t)\pmb\nu_j$
gives
$$
\pmb{\lambda}  = \begin{cases}
             \displaystyle\frac1{d-1+i}\bigl(\underbrace{2,\ldots,2}_{i},
             \underbrace{1,\ldots,1}_{d-1-i},0\bigr)
             & \text{if $j = d-1$}, \smallskip \\
             \displaystyle\frac1{d+2i}\bigl(\underbrace{3,\ldots,3}_{i},
             \underbrace{1,\ldots,1}_{d-i}\bigr)
             & \text{if $j = d$}.
\end{cases}
$$ 
Each displayed point is indeed a vertex: in the $x_i$-coordinates, it is the
unique point determined by $x_r=0$ for $r\notin\{i,j\}$ together with
$\sum_r x_r=1$ and $\sum_r g_r x_r=0$.
Together with the surviving vertices $\pmb\nu_{d-2}$,
$\pmb\nu_{d-1}$, and $\pmb\nu_d$, these are all the vertices of $\CS_d$.
\end{proof}

\begin{theorem}[Maximum purity on the outer polytope] \label{linear}
If $d \ge 8$, then the maximum of  $F(\pmb{\lambda}) $ on $\CS_{d}$ is given by
$$
\begin{cases}
\displaystyle\frac{4}{3d} & \text{if $d \equiv 0 \pmod{4}$}, \medskip \\

	                 \displaystyle\frac{4(3d-2)}{(3d-1)^2} & \text{if $d \equiv 1 \pmod{4}$}, \medskip \\
	                 
	                 \displaystyle\frac{4(3d + 4)}{(3d+2)^2}  & \text{if $d \equiv 2 \pmod{4}$}, \medskip\\	              
	                 
	               \displaystyle\frac{4(3d + 2)}{(3d+1)^2} & \text{if $d \equiv 3 \pmod{4}$}.
\end{cases}
$$
Moreover, this maximum  is attained only at
$$
\begin{cases}
		   \displaystyle\pmb{\lambda} = \frac2{3d}
		   \bigl(\underbrace{3,\ldots,3}_{d/4},
		   \underbrace{1,\ldots,1}_{3d/4}\bigr)
       & \text{if $d \equiv 0 \pmod{4}$ and $d>8$},  \smallskip \\
	   
	   \displaystyle\pmb{\lambda} \in
	       \bigl\{\frac16(\underbrace{1,\ldots,1}_{6},
	       \underbrace{0,\ldots,0}_{2}),
	       \frac1{12}(\underbrace{3,\ldots,3}_{2},
	       \underbrace{1,\ldots,1}_{6})\bigr\}
       & \text{if $d=8$}, \smallskip \\
	   
             \displaystyle\pmb{\lambda} =
             \frac2{3d-1}
             \bigl(\underbrace{3,\ldots,3}_{(d-1)/4},
             \underbrace{1,\ldots,1}_{(3d+1)/4}\bigr)
             & \text{if $d \equiv 1 \pmod{4}$}, \smallskip \\
             
             \displaystyle\pmb{\lambda} =
             \frac2{3d+2}
             \bigl(\underbrace{3,\ldots,3}_{(d+2)/4},
             \underbrace{1,\ldots,1}_{(3d-2)/4}\bigr)
             & \text{if $d \equiv 2\pmod{4}$}, \smallskip \\
                      
             \displaystyle\pmb{\lambda} =
             \frac2{3d+1}
             \bigl(\underbrace{3,\ldots,3}_{(d+1)/4},
             \underbrace{1,\ldots,1}_{(3d-1)/4}\bigr)
             & \text{if $d \equiv 3\pmod{4}$}.         
\end{cases}
$$
\end{theorem}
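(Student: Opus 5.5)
The plan is to use the remark preceding Lemma~\ref{outer-vertices-lemma}: $F$ is strictly convex, so every maximizer of $F$ on $\CS_d$ is a vertex. It therefore suffices to evaluate $F$ on the three vertex families of Lemma~\ref{outer-vertices-lemma} and compare. Direct computation gives
$$
F_1(i)=\frac1i,\qquad F_2(i)=\frac{3i+d-1}{(d-1+i)^2},\qquad F_3(i)=\frac{8i+d}{(d+2i)^2}.
$$
For type~1 the largest value is $F_1(d-2)=1/(d-2)$.

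\textbf{Type~2.} Substitute $s=d-1+i$, so $F_2=(3s-2(d-1))/s^2$. As a function of a real variable $s$ this is unimodal, with maximum at $s=\tfrac43(d-1)$ and maximum value $\tfrac{9}{8(d-1)}$. Hence every type~2 vertex satisfies $F_2\le 9/(8(d-1))$.

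\textbf{Type~3.} Substitute $x=d+2i$, so $F_3=(4x-3d)/x^2=:\phi(x)$. The function $\phi$ is increasing for $x<3d/2$ and decreasing for $x>3d/2$, with $\phi(3d/2)=4/(3d)$; the peak corresponds to $i=d/4$. Since $i$ is an integer, the maximum over integers $i\in[1,d-3]$ occurs at $\lfloor d/4\rfloor$ or $\lceil d/4\rceil$.
\begin{itemize}
\item If $d\equiv0\pmod 4$, the maximizer is $i=d/4$, giving $4/(3d)$.
\item If $d\equiv1\pmod 4$, the candidates are $x=3d/2\mp\tfrac12$.
\item If $d\equiv2\pmod 4$, the candidates are $x=3d/2\mp1$.
\item If $d\equiv3\pmod 4$, the candidates are $x=3d/2\mp\tfrac12$.
\end{itemize}
In each non-zero residue I compare $\phi$ at the two candidates by cross-multiplying. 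For $\phi(c-\delta)$ versus $\phi(c+\delta)$ with $c=3d/2$, the sign reduces to a simple polynomial comparison, which should show that the smaller $x$ wins (the one with $i=\lfloor d/4\rfloor$, or $(d+2)/4$ in the case $d\equiv 2$). Two things need to be confirmed here: that the winners match the displayed vertices, and that there is no tie, which gives uniqueness. The winning values should be $4(3d-2)/(3d-1)^2$, $4(3d+4)/(3d+2)^2$ and $4(3d+2)/(3d+1)^2$. Each of these can be written as $\tfrac{4}{3d}-\tfrac{4\epsilon^2}{3d\,x^2}$ with $\epsilon = x - 3d/2$ small.

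\textbf{Comparing types.}
\begin{itemize}
\item \emph{Type~1 against type~3.} We have $1/(d-2)\le 4/(3d)$ if and only if $d\ge8$, with equality exactly at $d=8$. This explains the second maximizer when $d=8$. For $d\not\equiv0\pmod 4$ one needs $1/(d-2)<F_3^{\max}$, which follows from the explicit formulas for $d\ge9$ after clearing denominators; for example, $(3d-1)^2<4(3d-2)(d-2)$ reduces to $3d^2-26d+15>0$.
\item \emph{Type~2 against type~3.} The bound $9/(8(d-1))$ is below $4/(3d)$ exactly when $32(d-1)>27d$, that is, $d>6.4$. However, the discrete type~3 maximum lies slightly below $4/(3d)$, so this continuous bound is not quite sharp enough. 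I would instead show $9/(8(d-1))<F_3^{\max}$ directly from the three residue formulas. Each reduces to a quadratic inequality in $d$ that holds for $d\ge8$ (at worst $d\ge 9$ or $10$), and the few small residues can be checked by hand.
\end{itemize}

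I expect the main obstacle to be this bookkeeping: the discrete rounding comparisons within type~3, and showing strict inequality against types~1 and~2 with the correct small-$d$ thresholds, since uniqueness depends on every comparison being strict except the single tie at $d=8$.
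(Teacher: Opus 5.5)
Your route is the paper's. Strict convexity reduces the problem to the vertices of Lemma~\ref{outer-vertices-lemma}, type~2 is bounded by its continuous maximum $9/(8(d-1))$, type~3 is unimodal with peak at $i=d/4$, and the $d=8$ tie comes from $1/(d-2)=4/(3d)$.

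\textbf{The type~3 rounding step is wrong as written.} There are two errors, and you should redo this step.
\begin{itemize}
\item[(i)] \emph{The candidates.} Write $d=4q+r$. Since $x=d+2i$ moves in steps of $2$, the candidates $i\in\{q,q+1\}$ give $\epsilon=x-3d/2\in\{-r/2,\,2-r/2\}$. So for $r=1$ they are $x\in\{3d/2-\tfrac12,\,3d/2+\tfrac32\}$, and for $r=3$ they are $x\in\{3d/2-\tfrac32,\,3d/2+\tfrac12\}$, not $3d/2\mp\tfrac12$.
\item[(ii)] \emph{The winner.} The smaller $x$ does not always win. Use your own identity $\phi=\frac{4}{3d}-\frac{4\epsilon^2}{3dx^2}$ and put $c=3d/2$.
  \begin{itemize}
  \item[$\cdot$] For $r=1$, the loss $\frac{1/4}{(c-1/2)^2}$ is smaller than $\frac{9/4}{(c+3/2)^2}$ because $c>3/2$. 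So $i=q=(d-1)/4$ wins.
  \item[$\cdot$] For $r=2$, both candidates have $|\epsilon|=1$, and the factor $x^2$ favours the larger $x$. So $i=q+1=(d+2)/4$ wins.
  \item[$\cdot$] For $r=3$, the candidate with $\epsilon=\tfrac12$ wins outright. So $i=q+1=(d+1)/4$ wins, not $\lfloor d/4\rfloor$.
  \end{itemize}
\end{itemize}
All these comparisons are strict, so there are no ties. They give exactly the displayed maximizers. The paper packages the same computation as $\operatorname{sgn}\bigl(f_3(q+1)-f_3(q)\bigr)=\operatorname{sgn}\bigl((6r-12)q+r(r-1)\bigr)$. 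Your listed winning values are correct, but they contradict the rule you stated (smaller $x$, i.e.\ $i=\lfloor d/4\rfloor$). Your parenthetical for $d\equiv2$ is also inconsistent with that rule. You should also note that $\lfloor d/4\rfloor$ and $\lceil d/4\rceil$ lie in $\{1,\dots,d-3\}$ for $d\ge8$.

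\textbf{The cross-type comparison works.} Your residue-by-residue quadratics are valid. The type~1 ones hold for $d\ge9$; for example, $3d^2-26d+15>0$ has its larger root near $8.05$. The type~2 ones, such as $15d^2-106d+55>0$, already hold for $d\ge7$. So no small cases remain to check by hand.

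\textbf{How the paper avoids the case split.} It uses $|d-4i_\star|\le2$ and $d+2i_\star>d$ to get the single bound $M_3>4(d^2-1)/(3d^3)$. This beats $9/(8(d-1))$ for $d\ge8$ and $1/(d-2)$ for $d\ge9$, leaving only $d=8$ for direct substitution. Both versions are fine; the paper's is just shorter.
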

\begin{proof}
Recall that every maximizer of $F$ on $\CS_d$ is a vertex of $\CS_d$.

\smallskip
\noindent\emph{Type $1$.}
At type $1$ vertices
$\pmb{\lambda}=\frac1i\bigl(\underbrace{1,\ldots,1}_{i},
\underbrace{0,\ldots,0}_{d-i}\bigr)$,
where $d-2 \le i \le d$, we have $F(\pmb{\lambda}) = \frac{1}{i}$.
So the maximum at type $1$ vertices is $M_1 = \frac{1}{d-2}$. 

\smallskip
\noindent\emph{Type $2$.}
At type $2$ vertices
$\pmb{\lambda}=\frac1{d-1+i}
\bigl(\underbrace{2,\ldots,2}_{i},
\underbrace{1,\ldots,1}_{d-1-i},0\bigr)$, where $1 \le i \le d-3$,
we have $F(\pmb{\lambda})=f_2(i)$, where
$$
f_2(x)=\frac{d-1+3x}{(d-1+x)^2},
\qquad
f_2'(x)=\frac{d-1-3x}{(d-1+x)^3}.
$$
Thus the continuous maximum occurs at $x=(d-1)/3$, and the discrete
type $2$ maximum satisfies
\begin{equation}\label{type2-continuous-bound}
M_2:=\max_{1\le i\le d-3}f_2(i)
\le \frac{9}{8(d-1)}.
\end{equation}

\smallskip
\noindent\emph{Type $3$.}
At type $3$ vertices
$\pmb{\lambda}=\frac1{d+2i}
\bigl(\underbrace{3,\ldots,3}_{i},
\underbrace{1,\ldots,1}_{d-i}\bigr)$,
where $1\le i\le d-3$, we have $F(\pmb\lambda)=f_3(i)$, where
$$
f_3(x)=\frac{d+8x}{(d+2x)^2},
\qquad
f_3'(x)=\frac{4(d-4x)}{(d+2x)^3}.
$$
Hence $f_3$ increases up to $d/4$ and decreases thereafter.  If
$d=4q+r$ with $r\in\{1,2,3\}$, the only possible maximizing integers are
$q$ and $q+1$, and direct subtraction gives
$$
\operatorname{sgn}\bigl(f_3(q+1)-f_3(q)\bigr)
=\operatorname{sgn}\bigl((6r-12)q+r(r-1)\bigr).
$$
This sign is negative for $r=1$ and positive for $r=2,3$.  Together with
the case $r=0$, where $d/4=q$, this shows that the unique maximizing index
is
$$
i_\star=\begin{cases}
d/4,&d\equiv0\pmod4,\\
(d-1)/4,&d\equiv1\pmod4,\\
(d+2)/4,&d\equiv2\pmod4,\\
(d+1)/4,&d\equiv3\pmod4.
\end{cases}
$$
For $d\ge8$, all these indices lie in $\{1,\ldots,d-3\}$, and
substitution gives
$$
M_3=f_3(i_\star)=\begin{cases}
\displaystyle\frac{4}{3d},&d\equiv0\pmod4,\smallskip\\
\displaystyle\frac{4(3d-2)}{(3d-1)^2},&d\equiv1\pmod4,\smallskip\\
\displaystyle\frac{4(3d+4)}{(3d+2)^2},&d\equiv2\pmod4,\smallskip\\
\displaystyle\frac{4(3d+2)}{(3d+1)^2},&d\equiv3\pmod4.
\end{cases}
$$
Substituting $i_\star$ into the type $3$ spectrum also gives the spectra
listed in the statement of the theorem.

We now compare the three vertex types without treating the four cases modulo
$4$ separately.  Since $|d-4i_\star|\le2$ and $d+2i_\star>d$, we have
$$
M_3=\frac{d+8i_\star}{(d+2i_\star)^2}
=\frac{4}{3d}
-\frac{(d-4i_\star)^2}{3d(d+2i_\star)^2}
>\frac{4(d^2-1)}{3d^3}.
$$
This lower bound is already sufficient. Indeed,
$$
\frac{4(d^2-1)}{3d^3}-\frac{9}{8(d-1)}
=\frac{5d^3-32d^2-32d+32}{24d^3(d-1)}>0
\qquad(d\ge8),
$$
whereas
$$
\frac{4(d^2-1)}{3d^3}-\frac{1}{d-2}
=\frac{d^3-8d^2-4d+8}{3d^3(d-2)}>0
\qquad(d\ge9).
$$
For the first numerator, set $d=8+s_1$; for the second, set
$d=9+s_2$.  Here $s_1,s_2\ge0$, and the resulting polynomials are,
respectively,
$$
5s_1^3+88s_1^2+416s_1+288,
\qquad
s_2^3+19s_2^2+95s_2+53.
$$
All their coefficients are positive, so both numerators are positive. By
\eqref{type2-continuous-bound} and $M_1=1/(d-2)$, it follows that
$M_3>M_2$ for every $d\ge8$ and $M_3>M_1$ for every $d\ge9$.
For $d=8$, direct substitution gives $M_3=M_1=1/6$, attained only at
the type $3$ vertex with $i=2$ and the type $1$ vertex with $i=6$,
respectively.  This proves all the assertions for $d\ge8$.
\end{proof}

The same defining inequality also gives sharp bounds for sums of the largest
$k$ entries of a spectrum.

\begin{proposition}[Leading partial sums]\label{leading-sums-prop}
Let $d\ge4$ and $1\le k\le d-3$ be integers.  Every
$\pmb\lambda\in\CS_d$ satisfies
\begin{equation}\label{leading-sum-bound}
\sum_{j=1}^k\lambda_j\le\frac{3k}{d+2k},
\qquad
\lambda_k\le\frac3{d+2k}.
\end{equation}
Both bounds are sharp on $\CS_d$, with equality at the type~$3$
vertex
$$
\frac1{d+2k}
\bigl(\underbrace{3,\ldots,3}_{k},
\underbrace{1,\ldots,1}_{d-k}\bigr).
$$
Consequently, if $d=mn$ with $2\le m\le n$, then
\eqref{leading-sum-bound} holds for every
$\pmb\lambda\in(\Lambda^{\mathrm{APPT}}_{m,n})^\downarrow$.
When $m=2$, both bounds are sharp for APPT spectra and, equivalently,
for absolutely separable spectra.
\end{proposition}

\begin{proof}
Set $s=\sum_{j=1}^k\lambda_j$.  Since the entries are decreasing,
$\lambda_1\ge s/k$.  Moreover, the defining inequality of $\CS_d$
gives
$$
\lambda_1\le\lambda_{d-2}+\lambda_{d-1}+\lambda_d
\le3\lambda_{d-2},
$$
and hence $\lambda_{d-2}\ge\lambda_1/3$.  Every term in the middle sum
below is therefore at least $\lambda_1/3$, while the sum of the last three
terms is at least $\lambda_1$.  It follows that
\begin{align*}
1
&=s+\sum_{j=k+1}^{d-3}\lambda_j
  +\lambda_{d-2}+\lambda_{d-1}+\lambda_d\\
&\ge s+\frac{d-k-3}{3}\lambda_1+\lambda_1\\
&=s+\frac{d-k}{3}\lambda_1\\
&\ge s+\frac{d-k}{3k}s
=\frac{d+2k}{3k}s.
\end{align*}
This proves the first inequality in \eqref{leading-sum-bound}; the second
follows from $\lambda_k\le s/k$.  The displayed type~$3$ vertex attains
equality in both inequalities.

For every factorization $d=mn$, the inclusion
$(\Lambda^{\mathrm{APPT}}_{m,n})^\downarrow\subseteq\CS_d$ gives the
stated APPT consequence.  Suppose now that $m=2$, so $d=2n$.  Since
$k\le d-3$, the last three entries of
the displayed spectrum are all $1/(d+2k)$, and therefore
$$
\lambda_{d-1}+2\sqrt{\lambda_{d-2}\lambda_d}
=\frac3{d+2k}=\lambda_1.
$$
The exact qubit--qudit criterion \eqref{ineq} shows that this spectrum
is APPT, and Johnston's equivalence \cite{Jo} shows that it is
absolutely separable.
\end{proof}

For $k=1$, Jivulescu et al.~\cite[Proposition~8.2]{J+} proved that the
value $3/(d+2)$ is attained by APPT and absolutely separable spectra for
every bipartition $d=mn$.  Song and Chen
\cite[Corollary~17]{SC} determined the exact corresponding leading-sum
maxima in the $3\otimes3$ APPT state set.  Proposition
\ref{leading-sums-prop} instead gives a uniform outer bound in every
bipartite dimension and is sharp for every qubit--qudit system.  The
restriction $k\le d-3$ is necessary: for $k=d-2$, the type~$1$ vertex
$\pmb\nu_{d-2}\in\CS_d$ has leading sum $1$, whereas
$3(d-2)/(3d-4)<1$.

\begin{proof}[Proof of Theorem \ref{main-thm}]
The case $mn=6$, necessarily $(m,n)=(2,3)$, follows from the maximum
$11/50$ proved by Song and Chen \cite[Corollary~11(iv)]{SC}; Johnston's
qubit--qudit equivalence \cite{Jo} additionally gives the same maximum for
absolutely separable states.  We may therefore assume
$mn\ge8$. Since
$\lambda_1 \le \lambda_{mn-2}+\lambda_{mn-1}+\lambda_{mn}$ for any
APPT state $\rho$ on $\BC^m\otimes\BC^n$, the purity
$\Tr(\rho^2)$ is bounded above by the maximum of $F(\pmb{\lambda})$
on $\CS_{mn}$.  The upper bound in Theorem \ref{main-thm} now follows
from Theorem \ref{linear} by taking $d=mn$.

We first prove strictness in qutrit--qudit systems.  Let $m=3$, $n\ge3$,
and $d=3n$.  Since $d\ge9$, Theorem~\ref{linear} shows that the outer
maximum is attained at a unique type~$3$ spectrum
$$
\pmb{\lambda}^{*}=\frac1{d+2i}
\bigl(\underbrace{3,\ldots,3}_{i},
\underbrace{1,\ldots,1}_{d-i}\bigr),
$$
where $i=i_\star$; the displayed formulas in that theorem give $i\ge2$ and
$d-i\ge6$.
Hildebrand's qutrit criterion \cite[Corollary~V.3]{Hi} requires, in
particular,
$$
K(\pmb{\lambda})=
\begin{pmatrix}
2\lambda_d & \lambda_{d-1}-\lambda_1 & \lambda_{d-3}-\lambda_2\\
\lambda_{d-1}-\lambda_1 & 2\lambda_{d-2} & \lambda_{d-4}-\lambda_3\\
\lambda_{d-3}-\lambda_2 & \lambda_{d-4}-\lambda_3 & 2\lambda_{d-5}
\end{pmatrix}
\succeq0.
$$
Here $K(\pmb\lambda)\succeq0$ means that $K(\pmb\lambda)$ is positive
semidefinite.
We have $\lambda_1^*=\lambda_2^*=3/(d+2i)$ and
$\lambda_{d-5}^*=\cdots=\lambda_d^*=1/(d+2i)$, while
$\lambda_3^*=1/(d+2i)$ exactly when $i=2$.
At $\pmb{\lambda}^{*}$, the scaled matrix
$(d+2i)K(\pmb{\lambda}^{*})$ has diagonal entries $2$.  Its $(1,2)$-
and $(1,3)$-entries are $-2$, while its $(2,3)$-entry is $0$ if $i=2$
and $-2$ if $i\ge3$.  Thus
$$
\det\bigl((d+2i)K(\pmb{\lambda}^{*})\bigr)=
\begin{cases}
-8,&i=2,\\
-32,&i\ge3
\end{cases}.
$$
In both cases the determinant is negative, so $K(\pmb\lambda^*)$ is not positive
semidefinite and $\pmb{\lambda}^{*}$ is not APPT.  The decreasingly ordered
APPT spectral set is compact.  If its maximum purity equalled the
outer-polytope maximum, an APPT spectrum would therefore attain that value
and, by uniqueness, would have to be $\pmb{\lambda}^{*}$.  This is impossible,
so the upper bound is strict.

It remains to verify sharpness in qubit--qudit systems.  For every $n\ge4$,
Theorem~\ref{linear} supplies a maximizing type~$3$ spectrum.  Its three
smallest eigenvalues are equal and
$\lambda_1=\lambda_{2n-1}+2\sqrt{\lambda_{2n-2}\lambda_{2n}}$.
Thus the exact qubit criterion \eqref{ineq} shows that the spectrum is
APPT and hence, by \cite{Jo}, absolutely separable.  This proves
the remaining sharpness cases.

Finally, put $d=mn$.  If $d\equiv0\pmod4$, then
$4/(3d)<2/d$.  In the remaining three cases modulo $4$, multiplication by the
positive denominators reduces the desired inequalities to
$$
3d^2-2d+1>0,\qquad 3d^2+4d+4>0,\qquad 3d^2+2d+1>0,
$$
respectively.  Hence the displayed bound is strictly smaller than $2/(mn)$.
Multiplying each of its four formulas by $mn$ also shows that the product
converges to $4/3$ as $mn\longrightarrow\infty$.
\end{proof}

\section{A universal entropy bound and the exact qubit--qudit minimum}

In this and the next section, $d$ denotes the total number of
eigenvalues.
Recall that the spectral entropy is
$H(\pmb\lambda)=-\sum_j\lambda_j\log_2\lambda_j$.
The vertex description of $\CS_d$ makes it possible to minimize this function
over the outer polytope.  At vertices of types $1$, $2$, and
$3$, respectively, the entropy values are
\begin{align}
H_1(i)&=\log_2 i, &&d-2\le i\le d,\label{entropy-type1}\\
H_2(i)&=\log_2(d-1+i)-\frac{2i}{d-1+i},
&&1\le i\le d-3,\label{entropy-type2}\\
H_3(i)&=\log_2(d+2i)-\frac{3i}{d+2i}\log_2 3,
&&1\le i\le d-3.\label{entropy-type3}
\end{align}

\begin{proposition}\label{outer-entropy-prop}
Let $d\ge4$ and put
$$
\widehat t_d=\frac d4(3\ln3-2).
$$
The minimum of $H$ over $\CS_d$ is
$$
\min_{\pmb\lambda\in\CS_d}H(\pmb\lambda)=
\begin{cases}
\log_2(d-2),&4\le d\le14,\\[1mm]
H_3(t_d),&d\ge15,
\end{cases}
$$
where, for $d\ge15$, $t_d$ is the unique minimizer of $H_3$ on
$\{\lfloor\widehat t_d\rfloor,\lceil\widehat t_d\rceil\}$.  Both integers
belong to $\{1,\ldots,d-3\}$.  The minimizing spectrum is unique.  For
$4\le d\le14$, it is
$$
\frac1{d-2}
\bigl(\underbrace{1,\ldots,1}_{d-2},0,0\bigr).
$$
For $d\ge15$, it is
\[
\frac{1}{d+2t_d}
\bigl(\underbrace{3,\ldots,3}_{t_d},
\underbrace{1,\ldots,1}_{d-t_d}\bigr).
\]
\end{proposition}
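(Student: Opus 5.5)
Since $H$ is strictly concave on the simplex and $\CS_d$ is a compact convex polytope, its minimum over $\CS_d$ is attained only at vertices. The argument mirrors the purity case: if a minimizer were a nontrivial convex combination of two distinct points of $\CS_d$, strict concavity would make one of them strictly smaller. By Lemma~\ref{outer-vertices-lemma} it therefore suffices to compare the three families of values \eqref{entropy-type1}--\eqref{entropy-type3}. Uniqueness of the minimizing spectrum will then follow from uniqueness of the minimizing vertex.

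\textbf{Type $1$ and type $2$.} Type $1$ gives the value $\log_2(d-2)$, attained uniquely at $i=d-2$. For type $2$, I will show $H_2(i)>\log_2(d-2)$ for every $1\le i\le d-3$. Set $N=d-1+i$ and $p=2i/N$; then $H_2=\log_2 N-p$. The vertex has $i$ entries equal to $2/N$ and $d-1-i$ entries equal to $1/N$. Since $N\ge d$, we get $\log_2 N-p\ge \log_2 d - p$, and $p\le 2(d-3)/(2d-4)<1$. This yields $H_2>\log_2 d-1$, which is not quite enough. The cleaner route is to treat $H_2(x)$ as a function of real $x$. Its derivative is
\[
\frac{1}{\ln 2\,(d-1+x)}-\frac{2(d-1)}{(d-1+x)^2}.
\]
This is negative for small $x$ and positive later, so $H_2$ is unimodal and it suffices to bound the continuous minimum at $x=2(d-1)\ln2-(d-1)$. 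There $H_2=\log_2\bigl(2(d-1)\ln2\bigr)-2+1/\ln 2$, and I will compare this with $\log_2(d-2)$, using $\log_2(2\ln 2)-2+1/\ln2>0$ (numerically about $0.03$) together with $d-1>d-2$. I therefore expect type $2$ never to win.

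\textbf{Type $3$.} Treat $H_3(x)$ as a function of real $x$ and write $N=d+2x$. The derivative is
\[
\frac{1}{\ln2}\Bigl(\frac{2}{N}\Bigr)-\frac{3d\log_2 3}{N^2},
\]
which vanishes at $N=\tfrac32 d\ln3$, that is, at $x=\widehat t_d$. So $H_3$ is decreasing and then increasing, and the discrete minimum over $1\le i\le d-3$ is at $\lfloor\widehat t_d\rfloor$ or $\lceil\widehat t_d\rceil$. Both integers lie in range for $d\ge15$, since $\widehat t_d\approx0.324d$. I then need to argue that a tie between the two integers cannot occur, so that $t_d$ is well defined. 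A tie would force an identity of the form $\log_2$ of a rational number equal to a rational multiple of $\log_2 3$, which I will rule out by unique factorization: one side would require an equality between a power of $3$ and a power of a ratio of integers with other prime factors. The continuous minimum value is
\[
\log_2\bigl(\tfrac32 d\ln3\bigr)-\frac{3\widehat t_d}{N}\log_2 3=\log_2 d+c,
\qquad c=\log_2\bigl(\tfrac32\ln3\bigr)-\frac{3\ln3-2}{2\ln 3}\log_2 3,
\]
which works out to $c=\log_2(3\ln3/2)-\log_2 3+1/\ln2\approx-0.1$. Hence type $3$ behaves like $\log_2 d-0.1$, against type $1$'s $\log_2(d-2)$. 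The crossover comes from comparing $\log_2\frac{d}{d-2}$ with $|c|$, which places it near $d\approx 15$.

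\textbf{Main obstacle.} The hard step is making this crossover exact in both directions. For $d\ge15$, I must show $\min_i H_3(i)<\log_2(d-2)$. Here I would bound the discrete minimum by the continuous minimum plus a small second-order error term $O(1/d)$, then verify monotonicity of the resulting inequality in $d$ with a direct check at $d=15$. For $4\le d\le14$, I must show every type $3$ value exceeds $\log_2(d-2)$. Since this is a finite check, I will evaluate $H_3$ at the integers nearest $\widehat t_d$ for each $d$ (including the small-$d$ cases where only $i=1$ is available) and confirm strict inequality. Because the margins near $d=14,15$ are small, these numerical comparisons need careful rigorous evaluation, for instance via exact exponentiated comparisons $(d+2i)^{d+2i}$ versus $3^{3i}(d-2)^{d+2i}$.
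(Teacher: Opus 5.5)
The gap is in your type~$2$ step. The constant you compute is $\log_2(2\ln2)-2+1/\ln2=\log_2\bigl(\tfrac{e\ln2}{2}\bigr)\approx-0.086$, not $+0.03$. So $\min_{x\ge0}H_2(x)=\log_2\bigl(\tfrac{e\ln2}{2}(d-1)\bigr)$ lies \emph{below} $\log_2(d-1)$. The lemma you set out to prove, $H_2(i)>\log_2(d-2)$ for all $1\le i\le d-3$, is therefore false for large $d$. Already for $d=19$ and $i=7$ one has $H_2(7)=\log_2 25-\tfrac{14}{25}\approx4.0839<\log_2 17\approx4.0875$.

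Consequently, for $d\ge15$, your planned inequality $\min_i H_3(i)<\log_2(d-2)$ only shows that type~$3$ beats type~$1$. It does not rule out a type~$2$ vertex as the global minimizer, so the case $d\ge15$ is not proved. For $4\le d\le14$ your type~$2$ conclusion is true, but the correct reason is $\frac{d-2}{d-1}\le\frac{12}{13}<\frac{e\ln2}{2}$, not positivity of the constant. The vertex reduction, the unimodality of $H_3$ about $\widehat t_d$, and your unique-factorization no-tie argument are fine.

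The missing ingredient is a direct comparison of type~$3$ against type~$2$ for $d\ge15$. The paper uses the lower bound $H_2>\log_2\bigl(\tfrac{15}{16}(d-1)\bigr)$. Through the functions $D_1,D_2$ of real variables $(q,s)$ with $d=3q+s$, it shows that the single type~$3$ value at $i=q$ lies below both $\log_2(d-2)$ and this type~$2$ bound. Monotonicity reduces this to integer checks at $(q,s)=(5,0)$.

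Within your continuous-minimum framework the repair is also short. The two continuous minima differ by $\log_2\bigl(\tfrac{\ln3}{\sqrt3\ln2}\cdot\tfrac{d}{d-1}\bigr)\le\log_2\bigl(\tfrac{15\ln3}{14\sqrt3\ln2}\bigr)\approx-0.028$ for $d\ge15$. This absorbs the discretization error of $H_3$, which is $O(1/d^2)$.

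Your simplification of $c$ also drops a factor. It should read $c=\log_2\bigl(\tfrac32\ln3\bigr)-\tfrac32\log_2 3+\tfrac1{\ln2}=\log_2\tfrac{e\ln3}{2\sqrt3}\approx-0.214$. With your value $-0.1$, the type~$1$/type~$3$ crossover would sit near $d\approx30$ rather than $15$. The true margin at $d=15$ is only about $0.0076$ ($H_3(5)\approx3.6929$ versus $\log_2 13\approx3.7004$), so these constants must be exact for your rigorous check to go through.
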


\begin{proof}
Entropy is strictly concave, so every minimizer on the compact
polytope $\CS_d$ must be a vertex: a nontrivial convex decomposition of a
nonvertex minimizer would contain a point with strictly smaller entropy.
Formulas
\eqref{entropy-type1}--\eqref{entropy-type3} follow by direct
substitution in the three vertex families.  The smallest type $1$
value is $\log_2(d-2)$.  Regarding the other two expressions as
functions of a real variable $x\ge0$, differentiation gives
\begin{align}
H_2'(x)
&=\frac{x-(d-1)(2\ln2-1)}{(d-1+x)^2\ln2},
&\min_{x\ge0}H_2(x)
&=\log_2\!\left(\frac{e\ln2}{2}(d-1)\right),
\label{continuous-H2}\\
H_3'(x)
&=\frac{4(x-\widehat t_d)}{(d+2x)^2\ln2},
&\min_{x\ge0}H_3(x)
&=\log_2\!\left(\frac{e\ln3}{2\sqrt3}\,d\right).
\label{continuous-H3}
\end{align}
For $4\le d\le14$,
$$
\frac{d-2}{d-1}\le\frac{12}{13}<\frac{15}{16}<\frac{e\ln2}{2},
\qquad
\frac{d-2}{d}\le\frac67<\frac{e\ln3}{2\sqrt3}.
$$
Thus the two continuous minima in \eqref{continuous-H2} and
\eqref{continuous-H3} are strictly larger than $\log_2(d-2)$.  Hence the
type~$1$ minimum is strictly smallest for $4\le d\le14$.

It remains to locate the transition for $d\ge15$.  Write
$d=3q+s$, where $q\ge5$ and $s\in\{0,1,2\}$, and evaluate the type $3$
expression at the allowed index $i=q$.  For the following monotonicity
argument, temporarily allow $q\ge5$ and $0\le s\le2$ to be real variables.
In natural logarithms set
$$
\Psi(q,s)
=\ln(5q+s)-\frac{3q}{5q+s}\ln3.
$$
Because $e\ln2/2>15/16$, \eqref{continuous-H2} shows that every type $2$
entropy is greater than $\log_2(15(d-1)/16)$.  To compare $H_3(q)$ with
this lower bound and with the type $1$ minimum, set
$$
D_1(q,s)
=\Psi(q,s)-\ln(3q+s-2),
$$
and
$$
D_2(q,s)
=\Psi(q,s)
-\ln\!\left(\frac{15(3q+s-1)}{16}\right).
$$
Direct differentiation gives
\begin{align*}
\partial_qD_1
&=\frac{2s-10}
{(5q+s)(3q+s-2)}
-\frac{3s\ln3}{(5q+s)^2},\\
\partial_sD_1
&=-\frac{2q+2}{(5q+s)(3q+s-2)}
+\frac{3q\ln3}{(5q+s)^2},\\
\partial_qD_2
&=\frac{2s-5}
{(5q+s)(3q+s-1)}
-\frac{3s\ln3}{(5q+s)^2},\\
\partial_sD_2
&=-\frac{2q+1}{(5q+s)(3q+s-1)}
+\frac{3q\ln3}{(5q+s)^2}.
\end{align*}
The $q$-derivatives are immediately negative.  Using $\ln3<10/9$ for
the $s$-derivatives gives
$$
\partial_sD_1<
\frac{4qs-50q-6s}
{3(5q+s)^2(3q+s-2)}<0,
\qquad
\partial_sD_2<
\frac{4qs-25q-3s}
{3(5q+s)^2(3q+s-1)}<0
$$
for $q\ge5$ and $0\le s\le2$.  At
$(q,s)=(5,0)$, the values are
$$
D_1(5,0)=\frac15\ln\!\left(\frac{25^5}{27\cdot13^5}\right)<0,
\qquad
D_2(5,0)=\frac15\ln\!\left(\frac{40^5}{27\cdot21^5}\right)<0.
$$
The final inequalities are the integer comparisons
$25^5<27\cdot13^5$ and $40^5<27\cdot21^5$.  Since both partial
derivatives are negative on $q\ge5$, $0\le s\le2$, we have
$D_j(q,s)\le D_j(5,0)<0$ for $j=1,2$.  Thus the type $3$
vertex with $i=q$ has smaller entropy than every type $1$ and type $2$
vertex whenever $d\ge15$.

Finally, the derivative in \eqref{continuous-H3} shows that $H_3$ decreases
up to $\widehat t_d$ and increases thereafter.  Moreover,
$d/4<\widehat t_d<d/3$ follows from
$1<\ln3<10/9$.  Thus the two candidate integers are allowed for
$d\ge15$.  Hence the minimum of $H_3(i)$ over integers is attained at
$\lfloor\widehat t_d\rfloor$ or $\lceil\widehat t_d\rceil$.  There can be no tie
between adjacent integers $i$ and $i+1$: equality would imply
$$
\frac{d+2i+2}{d+2i}
=3^{\frac{3d}{(d+2i)(d+2i+2)}},
$$
where the exponent lies strictly between $0$ and $1$.  The left-hand side
is rational, whereas the
right-hand side is irrational, since it is a nonintegral rational
power of $3$.
Hence the minimizing integer, and therefore the minimizing vertex, is
unique.
\end{proof}

Since every ordered APPT spectrum belongs to $\CS_{mn}$, Proposition
\ref{outer-entropy-prop} immediately gives the following explicit
bound in every bipartite dimension.  Write
$$
\begin{aligned}
S_{\min}^{\mathrm{APPT}}(m,n)
&=\min_{\pmb\lambda\in(\Lambda^{\mathrm{APPT}}_{m,n})^\downarrow}
H(\pmb\lambda),\\
S_{\min}^{\mathrm{ASEP}}(m,n)
&=\min_{\pmb\lambda\in(\Lambda^{\mathrm{ASEP}}_{m,n})^\downarrow}
H(\pmb\lambda).
\end{aligned}
$$
Then
\begin{equation}\label{universal-entropy-lower-bound}
\begin{aligned}
S_{\min}^{\mathrm{ASEP}}(m,n)
&\ge S_{\min}^{\mathrm{APPT}}(m,n),\\
S_{\min}^{\mathrm{APPT}}(m,n)
&\ge
  \begin{cases}
  \log_2(mn-2),&4\le mn\le14,\\[1mm]
  \displaystyle
  \log_2(mn+2t_{mn})
  -\frac{3t_{mn}}{mn+2t_{mn}}\log_2 3,&mn\ge15.
  \end{cases}
\end{aligned}
\end{equation}
Here $t_{mn}$ is the integer specified in Proposition
\ref{outer-entropy-prop}.  In particular, this entropy bound uses only
the defining inequality of $\CS_{mn}$ and does not involve an inner
polytope.

We next use Song and Chen's description of the extreme qubit--qudit
spectra to treat the remaining dimensions $2\le n\le7$, for which the
minimizer over $\CS_{2n}$ does not satisfy the exact qubit criterion.

\begin{lemma}[Endpoint reduction]
\label{song-chen-entropy-lemma}
Let $d=2n$ with $2\le n\le7$, and put
$$
\theta=3+2\sqrt2.
$$
The minimum of $H$ over the decreasingly ordered qubit--qudit APPT
spectra is attained at one of
\begin{equation*}
\begin{aligned}
U_d&=\frac{1}{d-1}
\bigl(\underbrace{1,\ldots,1}_{d-1},0\bigr),\\
V_d&=\frac{1}{(d-2)\theta+2}
\bigl(\underbrace{\theta,\ldots,\theta}_{d-2},
\underbrace{1,\ldots,1}_{2}\bigr),\\
T_{d,k}&=\frac{1}{d+2k}
\bigl(\underbrace{3,\ldots,3}_{k},
\underbrace{1,\ldots,1}_{d-k}\bigr),
\qquad 1\le k\le d-3.
\end{aligned}
\end{equation*}
Their entropies are, respectively,
\begin{align}
u_d:=H(U_d)&=\log_2(d-1),\label{ud-def}\\
v_d:=H(V_d)&=\log_2((d-2)\theta+2)
-\frac{(d-2)\theta}{(d-2)\theta+2}\log_2\theta,
\label{vd-def}\\
\tau_d(k):=H(T_{d,k})=H_3(k)&=\log_2(d+2k)
-\frac{3k}{d+2k}\log_2 3.\label{taud-def}
\end{align}
\end{lemma}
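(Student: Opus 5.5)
The plan is to combine strict concavity of entropy with convexity of the qubit--qudit APPT spectral set, and then to use the Song--Chen description of its extreme points. For $m=2$ the decreasingly ordered APPT spectra are exactly the ordered probability vectors satisfying \eqref{ineq}. The function $\lambda_{d-1}+2\sqrt{\lambda_{d-2}\lambda_d}-\lambda_1$ is concave, since the geometric mean is concave. Hence this set is a compact convex set. As in the proof of Proposition~\ref{outer-entropy-prop}, strict concavity of $H$ forces every minimizer to be an extreme point. The task therefore reduces to listing the extreme points, which Song and Chen do in \cite[Theorems~9 and~10]{SC}, and discarding those that cannot be minimizers.

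I would sort the Song--Chen extreme points into two classes. The first class consists of isolated extreme points. These are the vertices $\pmb\nu_{d-1}=U_d$ and $\pmb\nu_d$ of the ordered simplex that satisfy \eqref{ineq}; note that $\pmb\nu_{d-2}$ violates \eqref{ineq} because $\lambda_{d-1}=\lambda_d=0$. The class also contains the points where \eqref{ineq} holds with equality on a low-dimensional face of $\CO_d$. The uniform vector $\pmb\nu_d$ has entropy $\log_2 d>u_d$, so it is discarded at once.

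The second class consists of points on the curved part of the boundary. On that part $\lambda_1=\lambda_{d-1}+2\sqrt{\lambda_{d-2}\lambda_d}$, and the spectrum has only a few distinct values (blocks of equal top entries, a middle block, and possibly distinct $\lambda_{d-2},\lambda_{d-1},\lambda_d$). On each face of $\CO_d$ in the Song--Chen list, these extreme points form one-parameter arcs. I would parametrize each arc, for instance by $s=\sqrt{\lambda_{d-2}/\lambda_d}\in[1,\,\cdot\,]$ after normalization. The arc endpoints are recognized as follows:
\begin{itemize}
\item $s=1$, where the last three entries coincide, gives the type~$3$ points $T_{d,k}$, and these satisfy \eqref{ineq} with equality as in Proposition~\ref{leading-sums-prop};
\item the endpoint where $\lambda_1=\cdots=\lambda_{d-2}$ and $\lambda_{d-1}=\lambda_d$ gives $V_d$, since $\theta=1+2\sqrt\theta$ because $\sqrt\theta=1+\sqrt2$;
\item the degenerate endpoint where $\lambda_d=0$ gives $U_d$.
\end{itemize}

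The key step is to show that the entropy restricted to each arc has no interior minimum. I will try to show it is concave in the chosen parameter, or at least that its derivative changes sign at most once, from positive to negative. Then the minimum over the arc is attained at an endpoint, which is one of $U_d$, $V_d$, $T_{d,k}$. Because $n\le7$ means $d\le14$, I expect only finitely many face types. If a clean concavity argument fails, the fallback is to write $\frac{d^2}{ds^2}H$ explicitly for each block pattern. The sign check then becomes an elementary inequality in $s$ and the finitely many block sizes, and these could be verified case by case. This arc analysis is the main obstacle: the normalization makes entropy along the curve a non-obvious function, and one must make sure no interior critical point is a minimum.

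Finally, the entropy formulas \eqref{ud-def}--\eqref{taud-def} follow by direct substitution into $H$, exactly as \eqref{entropy-type3} was derived; in particular $\tau_d(k)=H_3(k)$.
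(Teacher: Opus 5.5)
\textbf{Gap: the arc step is not proved, and the method you propose for it fails.}

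Your reduction matches the paper's. Strict concavity puts the minimizer at an extreme point. The Song--Chen classification organizes the relevant extreme points into three arcs: $\mathcal A(x)$ from $U_d$ to $V_d$, $\mathcal B_k(x)$ from $U_d$ to $T_{d,k}$, and $\mathcal C_k(x)$ from $T_{d,k}$ to $V_d$. One must then show that $H$ has no interior minimum on any arc. Your endpoint identifications are right, except that $\mathcal A$ has no $T_{d,k}$ end; in your variable $s=\sqrt{\lambda_{d-2}/\lambda_d}$ it runs over $[1+\sqrt2,\infty]$.

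That last step is the entire content of the lemma, and you leave it as an intention. Moreover, the route you lead with is false: $H$ is not concave along these arcs. On $\mathcal A$ or $\mathcal B_k$, write $x=1/s$ and work in nats. Near the $U_d$ end the linear terms cancel, and $\widetilde H(x)=\ln(d-1)-\frac{2x^2\ln x}{d-1}+O(x^2)$. More precisely, $\widetilde H''(x)=-\frac{4\ln x}{d-1}+O(1)\to+\infty$ as $x\to0^+$. Since $\widetilde H'(x)>0$ for small $x$, one also gets $\frac{d^2\widetilde H}{ds^2}=x^4\widetilde H''(x)+2x^3\widetilde H'(x)>0$ for large $s$. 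So $H$ is convex in $s$ near $s=\infty$. Your fallback of checking the sign of $d^2H/ds^2$ for each block pattern would therefore be trying to certify a sign that is not constant.

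What is actually needed is the weaker property you mention in passing: the derivative changes sign at most once, from $+$ to $-$, or $H$ is monotone. You give no way to establish it, and this is the missing idea. The paper proceeds as follows.
\begin{itemize}
\item For unnormalized weights $b_j$ with $Z=\sum_j b_j$, it uses $\widetilde H'=(Z'\sum_jb_j\ln b_j-Z\sum_jb_j'\ln b_j)/Z^2$.
\item This writes the sign of the derivative on each family as the sign of $\Phi-1$, with $\Phi$ an explicit ratio of rational functions times logarithms. For example, on $\mathcal A$ one has $\Phi_A=\frac{2x(d-1-x)\ln(1/x)}{(d-2+x-x^2)\ln(1/(1-2x))}$.
\item It proves $(\ln\Phi)'<0$ by inserting the bounds $2u<\ln\frac{1+u}{1-u}<2\bigl(u+\frac{u^3}{3(1-u^2)}\bigr)$ and $\ln(1+y)<\frac{y(y+2)}{2(y+1)}$. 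This reduces everything to sign checks of explicit polynomials, uniformly in $d$ and $k$.
\item The small blocks, $d-1-k=2$ in $\mathcal B_k$ and $d-k-2=1$ in $\mathcal C_k$, fall outside those estimates. They are handled by direct bounds showing $H$ is monotone there.
\end{itemize}
Your remark that $d\le14$ leaves only finitely many arcs does not by itself certify the sign of a transcendental function on each interval. An argument of this kind is still required.
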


The proof of Lemma~\ref{song-chen-entropy-lemma} is deferred to
Appendix~\ref{appendix-song-chen-endpoint}.

\begin{proof}[Proof of Theorem \ref{entropy-main-thm}]
\smallskip
\noindent\emph{The case $n\ge8$.}
Put $d=2n$.  Since $\widehat t_{2n}=\frac n2(3\ln3-2)$, Proposition
\ref{outer-entropy-prop} shows that, for $n\ge8$, $t_{2n}=r_n$ and the
minimum over $\CS_{2n}$ is $h_n(r_n)$.  Its unique minimizing spectrum
is
$$
\pmb\lambda^*=\frac{1}{2n+2r_n}
\bigl(\underbrace{3,\ldots,3}_{r_n},
\underbrace{1,\ldots,1}_{2n-r_n}\bigr).
$$
The three smallest entries of $\pmb\lambda^*$ are all
$1/(2n+2r_n)$, while its largest entry is $3/(2n+2r_n)$.  Hence
$$
\lambda^*_1
=\lambda^*_{2n-1}
+2\sqrt{\lambda^*_{2n-2}\lambda^*_{2n}},
$$
so $\pmb\lambda^*$ satisfies the exact qubit APPT criterion
\eqref{ineq}.  It therefore belongs to
$(\Lambda^{\mathrm{APPT}}_{2,n})^\downarrow$.  Combining this
fact with the second inequality in
\eqref{universal-entropy-lower-bound} proves that the APPT minimum is
$h_n(r_n)$.  Johnston's equality of the APPT and absolutely separable
spectral sets in qubit--qudit systems \cite{Jo} gives the same result
for absolute separability.  Uniqueness follows from Proposition
\ref{outer-entropy-prop}: the decreasingly ordered minimizer is unique,
or equivalently the spectrum is unique up to permutation.

\smallskip
\noindent\emph{The cases $2\le n\le7$.}
Lemma
\ref{song-chen-entropy-lemma} reduces the problem to a finite endpoint
comparison.  Appendix~\ref{appendix-entropy-comparisons} gives exact
rational-interval certificates showing that the unique minimizing endpoints
for $d=4,6,8,10,12,14$ are, respectively,
$U_4,V_6,T_{8,3},T_{10,3},T_{12,4},T_{14,5}$.  It also identifies the
closest competitor and a positive rational lower bound for each entropy gap.
In particular,
$r_4=3$, $r_5=3$, $r_6=4$, and $r_7=5$.
The minimizing endpoints give exactly the spectra and values stated in the
theorem.  Appendix~\ref{appendix-song-chen-endpoint} shows that each
one-parameter family is monotone or increases and then decreases.  Thus
the positive endpoint gaps also give uniqueness of the decreasingly ordered
minimizing spectrum in these dimensions.
\end{proof}

\begin{remark}
The $2\otimes3$ spectrum confirms the numerical entropy candidate of
Song and Chen.  Their reported $2\otimes4$ candidate
\cite[Remark following Corollary~11]{SC} corresponds to $T_{8,4}$,
whereas the exact minimizer is $T_{8,3}$; indeed,
$$
H(T_{8,4})-H(T_{8,3})
=\log_2\frac87-\frac3{28}\log_2 3>0,
$$
the last inequality being equivalent to $(8/7)^{28}>27$; this follows
from $(8/7)^7>5/2$ and $(5/2)^4>27$.
\end{remark}

\section{Outer spectral volume and qubit--qudit decay}

Let
$$
\Delta_{d-1}=\{\pmb\lambda\in\mathbb R^d:
\lambda_i\ge0,\ \textstyle\sum_i\lambda_i=1\}
$$
and let $\pmb\lambda^\downarrow$ denote the vector obtained by arranging the
entries of $\pmb\lambda$ in decreasing order.  We use the usual
$(d-1)$-dimensional Euclidean volume on this simplex and define
$$
\widehat{\CS}_d
=\{\pmb\lambda\in\Delta_{d-1}:
\pmb\lambda^\downarrow\in\CS_d\},\qquad
R_d=\frac{\operatorname{vol}_{d-1}(\widehat{\CS}_d)}
{\operatorname{vol}_{d-1}(\Delta_{d-1})}.
$$

\begin{proposition}\label{outer-volume-prop}
For every $d\ge4$,
\begin{equation}\label{outer-volume-beta}
R_d=\frac{d(d-1)}{d-2}
\left[\BetaFn\!\left(\frac d2,d-2\right)
-\BetaFn(d-1,d-2)\right],
\end{equation}
where $\BetaFn(x,y)=\Gamma(x)\Gamma(y)/\Gamma(x+y)$.
\end{proposition}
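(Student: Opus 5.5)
The plan is to reduce $R_d$ to a probability about i.i.d.\ exponential random variables, and then use the memoryless property to integrate out everything except the two smallest order statistics.

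\emph{Step 1: reduce to a probability.} The defining conditions of $\widehat{\CS}_d$ are invariant under permutations. The sets $\{\pmb\lambda\in\Delta_{d-1}:\pmb\lambda\ \text{has a given ordering}\}$ tile $\Delta_{d-1}$ up to measure zero. Hence $R_d$ is the probability that a uniformly distributed point of $\Delta_{d-1}$, once sorted decreasingly, satisfies $\lambda_1\le\lambda_{d-2}+\lambda_{d-1}+\lambda_d$. A uniform point of $\Delta_{d-1}$ has the law of $(E_1,\dots,E_d)/\sum_j E_j$ with $E_j$ i.i.d.\ $\mathrm{Exp}(1)$. The condition is homogeneous of degree one, so the normalization can be dropped. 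We obtain
\[
R_d=\Pr\bigl(E_{\max}\le a+b+c\bigr),
\]
where $a\le b\le c$ are the three smallest of the $E_j$.

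\emph{Step 2: condition on the three smallest values.} We use the Rényi representation: $a=Z_1/d$, $b-a=Z_2/(d-1)$, $c-b=Z_3/(d-2)$ with $Z_k$ i.i.d.\ $\mathrm{Exp}(1)$. Conditionally on $(a,b,c)$, the remaining $d-3$ values are $c$ plus i.i.d.\ $\mathrm{Exp}(1)$ excesses, by memorylessness. The event $E_{\max}\le a+b+c$ says that every excess is at most $a+b$, which has conditional probability $(1-e^{-(a+b)})^{d-3}$. Since $c$ has dropped out, $Z_3$ integrates away. Moreover $a+b=2Z_1/d+Z_2/(d-1)=:U+V$, where $U\sim\mathrm{Exp}(d/2)$ and $V\sim\mathrm{Exp}(d-1)$ are independent. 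Thus
\[
R_d=\mathbb E\bigl[(1-e^{-(U+V)})^{d-3}\bigr].
\]

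\emph{Step 3: evaluate the integral.} For $d\ge4$ the two rates are distinct, so $U+V$ has the hypoexponential density
\[
\frac{(d/2)(d-1)}{(d-1)-d/2}\bigl(e^{-ds/2}-e^{-(d-1)s}\bigr)
=\frac{d(d-1)}{d-2}\bigl(e^{-ds/2}-e^{-(d-1)s}\bigr),\qquad s>0.
\]
Substituting $w=e^{-s}$ gives
\[
\int_0^\infty(1-e^{-s})^{d-3}e^{-\alpha s}\,ds=\BetaFn(\alpha,d-2).
\]
Taking $\alpha=d/2$ and $\alpha=d-1$ yields exactly \eqref{outer-volume-beta}.

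The main obstacle is Step 2, specifically justifying the conditional independence and the shifted-exponential structure of the top $d-3$ order statistics. I would state the Rényi spacing representation explicitly and derive the conditional law from the joint order-statistics density $d!\,e^{-\sum x_j}$ on the ordered cone. This makes the measure-zero tie issues and the permutation count routine.
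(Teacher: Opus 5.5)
Your proof is correct, and it takes a genuinely different route from the paper's.

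\textbf{The paper's route.} The paper passes to the coordinates $x_i=i(\lambda_i-\lambda_{i+1})$. Under these, the ordered set $\CO_d$ is an affine image of the standard simplex, and the defining inequality becomes the halfspace $\sum_i g_ix_i\ge0$, with $g_i=-1/i$ for $i\le d-3$, $g_{d-2}=0$, $g_{d-1}=1/(d-1)$ and $g_d=2/d$. It then quotes Lasserre's closed formula $\sum_{z_i>0}z_i^{d-1}/\prod_{j\ne i}(z_i-z_j)$ for the relative volume of a halfspace section of a simplex. The two surviving terms are simplified into Beta functions by Gamma-function product identities, and a separate continuity remark handles the zero coefficient $g_{d-2}$.

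\textbf{How your route relates.} Your exponential picture is the probabilistic version of the same coordinates. The $x_i$ are exactly the normalized R\'enyi spacings $Z_{d-i+1}/\sum_kZ_k$, so the paper's halfspace reads $\frac{2}{d}Z_1+\frac{1}{d-1}Z_2\ge\sum_{k=4}^{d}Z_k/(d-k+1)$. You differ in how this probability is evaluated. Instead of a general halfspace-volume formula, you use memorylessness to recognize the right-hand side as the maximum of $d-3$ i.i.d.\ standard exponentials. Its distribution function is $(1-e^{-t})^{d-3}$, so only a two-rate hypoexponential convolution remains, and the Beta functions arise directly as integrals.

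\textbf{What each buys.} Your route is self-contained and handles the zero coefficient automatically, since $Z_3$ simply drops out. It also explains structurally where $\BetaFn(d/2,d-2)$ and $\BetaFn(d-1,d-2)$ come from. The paper's route reuses the coordinate machinery of Lemma~\ref{outer-vertices-lemma} and would apply unchanged to any single linear constraint with distinct coefficients. Yours relies on the special fact that the negative coefficients $-1/i$ are precisely the R\'enyi weights of a maximum of exponentials.

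One small point: your Step~1 tiling remark is not needed, since $\widehat{\CS}_d$ is defined by sorting, but it does no harm.
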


\begin{proof}
Recall from Lemma~\ref{outer-vertices-lemma} that
$\CO_d=\operatorname{conv}\{\pmb\nu_1,\ldots,\pmb\nu_d\}$.  The affine map
$$
\pmb x\longmapsto\sum_i x_i\pmb\nu_i
$$
gives a one-to-one correspondence from $\Delta_{d-1}$ onto $\CO_d$.  Because it multiplies every
$(d-1)$-dimensional volume by the same nonzero constant, it preserves volume
ratios.  For
$G(\pmb\lambda)=\lambda_{d-2}+\lambda_{d-1}+\lambda_d-\lambda_1$,
put $g_i=G(\pmb\nu_i)$.  The calculation in
Lemma~\ref{outer-vertices-lemma} gives
$$
g_i=-\frac1i\quad(1\le i\le d-3),\qquad
g_{d-2}=0,\qquad g_{d-1}=\frac1{d-1},\qquad g_d=\frac2d.
$$
The points that this map sends into $\CS_d$ are
$\{\pmb x\in\Delta_{d-1}:\sum_i g_i x_i\ge0\}$.
For $d\ge4$, these coefficients are all distinct; in particular,
$1/(d-1)\ne2/d$.  For distinct real coefficients $z_1,\ldots,z_d$,
scaling Lasserre's simplex-section formula and passing to the complementary
halfspace \cite[Theorem 2.2]{Las} gives
\begin{equation*}
\frac{\operatorname{vol}_{d-1}\{\pmb x\in\Delta_{d-1}:
\sum_i z_ix_i\ge0\}}
{\operatorname{vol}_{d-1}(\Delta_{d-1})}
=\sum_{z_i>0}\frac{z_i^{d-1}}
{\prod_{j\ne i}(z_i-z_j)}.
\end{equation*}
Applying this complementary form with $z_i=g_i$, only $i=d-1,d$ contribute.
The zero coefficient $g_{d-2}=0$ causes no ambiguity, since the formula at
this boundary value follows by continuity in the halfspace threshold.  Using
$\prod_{j=1}^{r}(a+j)=\Gamma(a+r+1)/\Gamma(a+1)$ to simplify the two
products gives
$$
\frac{g_{d-1}^{d-1}}
{\prod_{j\ne d-1}(g_{d-1}-g_j)}
=-\frac{d(d-1)}{d-2}\BetaFn(d-1,d-2)
$$
and
$$
\frac{g_d^{d-1}}
{\prod_{j\ne d}(g_d-g_j)}
=\frac{d(d-1)}{d-2}\BetaFn\!\left(\frac d2,d-2\right).
$$
Their sum proves \eqref{outer-volume-beta}.  Finally, the full simplex is the
union of the $d!$ copies of $\CO_d$ obtained by permuting coordinates.  Their interiors are
disjoint, and their overlaps have volume zero.  The same statement holds for
$\widehat{\CS}_d$, so the ratio computed in $\CO_d$ equals the
permutation-invariant relative-volume ratio.
\end{proof}

For a measurable set $E\subseteq\Delta_{d-1}$, write
$$
\operatorname{rvol}(E)
:=\frac{\operatorname{vol}_{d-1}(E)}
{\operatorname{vol}_{d-1}(\Delta_{d-1})}
$$
for its relative spectral volume.  Thus
$$
a_n=\operatorname{rvol}(\Lambda^{\mathrm{APPT}}_{2,n})
=\operatorname{rvol}(\Lambda^{\mathrm{ASEP}}_{2,n}).
$$

We next compare $a_n$ with explicit inner- and outer-polytope volumes; these
bounds make the large-$n$ decay transparent.

These are Euclidean volumes on the simplex of spectra.  They are different
from Hilbert--Schmidt probabilities on the full state space: after passing to
eigenvalues, the latter use a density containing the factor
$\prod_{i<j}(\lambda_i-\lambda_j)^2$.

For an inner bound with the matching exponential decay rate, we use the
polytope of Ahiable,
Kothakonda and Winter.  Its decreasingly ordered part is
\begin{equation}\label{inner-qubit-polytope}
\CP_{2,n}^{\downarrow}
=\{\pmb\lambda\in\CO_{2n}:
\lambda_1\le\lambda_{2n-1}+2\lambda_{2n}\}.
\end{equation}
Here $\CP_{2,n}$ denotes the full permutation-invariant polytope obtained from
\eqref{inner-qubit-polytope}; see
\cite[Definition 6.1 and Theorem 6.2]{AKW}.  The exact qubit criterion, Johnston's
equivalence \cite{Jo}, and the defining inequality of $\CS_{2n}$ give
\begin{equation}\label{inner-outer-qubit}
\CP_{2,n}^{\downarrow}
\subseteq(\Lambda^{\mathrm{APPT}}_{2,n})^\downarrow
=(\Lambda^{\mathrm{ASEP}}_{2,n})^\downarrow
\subseteq\CS_{2n}.
\end{equation}
Indeed, ordering gives $\lambda_{2n-2}\ge\lambda_{2n}$, so the inner
inequality implies the exact qubit criterion \eqref{ineq}; the outer
inequality then follows from $2\sqrt{ab}\le a+b$.
The quantities $p_n$ and $q_n$ in Theorem~\ref{volume-main-thm} are the
relative volumes of the inner and outer polytopes, respectively.  The formula
for $p_n$ below is the $m=2$ specialization of
\cite[Proposition 6.14]{AKW}:
\begin{equation}\label{inner-outer-volumes}
\begin{aligned}
p_n&=\operatorname{rvol}(\CP_{2,n})
=\frac{n!(2n-2)!}{(3n-2)!},\\
q_n&=R_{2n}=\frac{n(2n-1)}{n-1}
\bigl[\BetaFn(n,2n-2)-\BetaFn(2n-1,2n-2)\bigr].
\end{aligned}
\end{equation}
Equivalently, expanding the beta functions gives
\begin{equation}\label{outer-qubit-factorial}
q_n=
\frac{(2n-1)(2n-3)!n!}{(n-1)(3n-3)!}
-\frac{2n(2n-1)((2n-3)!)^2}{(4n-4)!}.
\end{equation}

\begin{proof}[Proof of Theorem \ref{volume-main-thm}]
\smallskip
\noindent\emph{Finite-dimensional bounds.}
All three spectral sets are invariant under coordinate permutations, so the
ordered inclusions in \eqref{inner-outer-qubit} give inclusions in the full
probability simplex.  These inclusions, together with Proposition
\ref{outer-volume-prop}, give
$$
p_n\le a_n\le q_n.
$$
Dividing \eqref{outer-qubit-factorial} by the formula for $p_n$ in
\eqref{inner-outer-volumes} and simplifying yields $q_n=c_np_n$, where
\begin{equation}\label{volume-ratio-cn}
c_n=\frac{(2n-1)(3n-2)}{2(n-1)^2}
-\frac{(2n-1)(2n-3)!(3n-2)!}
{(n-1)(n-1)!(4n-4)!}.
\end{equation}
The quantity subtracted is positive, and the first term equals
$$
3+\frac{5n-4}{2(n-1)^2}.
$$
Now $c_2=3$ and $c_3=55/16<4$.  For $n\ge4$, the inequality
$2n^2-9n+6>0$ shows that the first term itself is smaller than $4$.
Thus $c_n<4$ for every $n\ge2$, proving all finite-dimensional inequalities
in Theorem~\ref{volume-main-thm}.

\smallskip
\noindent\emph{Asymptotics.}
We use Stirling's formula in the form
$$
k!=\sqrt{2\pi k}\left(\frac{k}{e}\right)^k
\left(1+O\left(\frac1k\right)\right)
\qquad(k\longrightarrow\infty).
$$
Applying it to the formula for $p_n$ in \eqref{inner-outer-volumes} gives
the first asymptotic relation in \eqref{volume-asymptotics} below.  A second
application of Stirling's formula shows that the positive second term on the
right-hand side of \eqref{volume-ratio-cn} is asymptotic to
$3\sqrt{3/2}\,(27/64)^{n-1}$ and therefore tends to zero.  Since
the first term in \eqref{volume-ratio-cn} is
$3+(5n-4)/[2(n-1)^2]$, it follows that $c_n\to3$.  Therefore
$q_n=c_np_n$ and
\begin{equation}\label{volume-asymptotics}
p_n\sim\frac32\sqrt{3\pi n}\left(\frac4{27}\right)^n,
\qquad
q_n\sim\frac92\sqrt{3\pi n}\left(\frac4{27}\right)^n.
\end{equation}
The two asymptotic formulas and $p_n\le a_n\le q_n$ first give
$a_n=\Theta(\sqrt n(4/27)^n)$.  Moreover,
$$
-\frac1n\ln q_n\le-\frac1n\ln a_n\le-\frac1n\ln p_n,
\qquad
\lim_{n\to\infty}-\frac1n\ln p_n
=\lim_{n\to\infty}-\frac1n\ln q_n=\ln\frac{27}{4}.
$$
The squeeze theorem therefore gives the asserted exponential decay rate for
$a_n$.
\end{proof}

\section{Conclusion}

Together with the Song--Chen classification, the outer-polytope method gives
sharp qubit--qudit leading-partial-sum bounds for $1\le k\le2n-3$, the exact
minimum entropy, and, when $n\ge3$, the exact maximum purity.  Combined with
the Ahiable--Kothakonda--Winter inner polytope, it also gives two-sided
spectral-volume bounds with exact exponential decay rate $\ln(27/4)$.  The
outer-polytope relaxation is tight for qubit purity when $n\ge4$, for entropy
when $n\ge8$, and for all leading partial sums in the stated range; the
Song--Chen classification supplies the remaining qubit purity and
entropy cases.  By contrast, the relaxation is not tight for qutrit purity,
which disproves the D\~ung--Kh\^oi conjecture.  The volume result further shows
that qubit--qudit APPT spectra occupy an exponentially shrinking fraction of
the spectral simplex.  Determining the exact maximum APPT purity remains open
in general for $m\ge3$.

\section*{Author Declarations}

\noindent\textbf{Conflict of Interest}.  The author has no conflicts to
disclose.

\medskip
\noindent\textbf{Author Contributions}.  Anh T. Tran: Conceptualization,
Formal analysis, Investigation, Methodology, Validation, Writing -- original
draft, Writing -- review and editing.

\medskip
\noindent\textbf{AI Disclosure}.  The author used GPT-5.6 Sol through ChatGPT (OpenAI) to improve the language and to perform secondary checks of algebraic simplifications and rational-interval computations. The purpose was to reduce the risk of transcription and arithmetic errors; the tool was not treated as a source of mathematical authority. Every proof, citation, calculation, and conclusion was independently reviewed and verified by the author, who takes full responsibility for the contents of the manuscript. 

\medskip
\noindent\textbf{Funding}.  This research received no external funding.

\section*{Data Availability}

Data sharing is not applicable to this article as no new data were created or
analyzed in this study.

\appendix

\section{Proof of the endpoint-reduction lemma}
\label{appendix-song-chen-endpoint}

Since the APPT state set is compact and convex and von Neumann entropy is
strictly concave, every entropy minimizer is an extreme point.
Song and Chen describe the extreme points and give their formulas in
\cite[Theorem 9 and Eqs.~(B11)--(B13)]{SC} for $n=2$, and in
\cite[Theorem 10 and Eqs.~(B14)--(B16)]{SC} for $n\ge3$.  Since both the
APPT property and entropy are invariant under unitary conjugation, entropy
depends only on the spectrum.  Using Song and Chen's parameter $r$, the
three families below are their three one-parameter extreme-point families,
rewritten with a common spectral normalization.  The
substitutions $x=r^{-1/2}$ in Eqs.~(B11), (B12), (B14), and (B15), and
$x=\sqrt r$ in Eqs.~(B13) and (B16), give the following normalized spectra;
these three families exhaust the classification.  In the last two families,
$k$ records the indicated eigenvalue multiplicity; for the last family, write
$\ell=d-k-2$.
\begin{equation*}
\begin{aligned}
\mathcal A(x)&=\frac{1}{d-1-2x+x^2}
\bigl(\underbrace{1,\ldots,1}_{d-2},1-2x,x^2\bigr),
&&0\le x\le\sqrt2-1,\\
\mathcal B_k(x)&=
\frac{1}{d-1+2kx+x^2}
\bigl(\underbrace{1+2x,\ldots,1+2x}_{k},
\underbrace{1,\ldots,1}_{d-k-1},x^2\bigr),
&&0\le x\le1,\\
\mathcal C_k(x)&=
\frac{1}{k(1+2x)+\ell x^2+2}
\bigl(\underbrace{1+2x,\ldots,1+2x}_{k},
\underbrace{x^2,\ldots,x^2}_{\ell},
\underbrace{1,\ldots,1}_{2}\bigr),
&&1\le x\le1+\sqrt2,
\end{aligned}
\end{equation*}
where $1\le k\le d-3$.  More explicitly,
$$
\mathcal A(0)=U_d,\quad \mathcal A(\sqrt2-1)=V_d,
\qquad
\mathcal B_k(0)=U_d,\quad \mathcal B_k(1)=T_{d,k},
$$
and
$$
\mathcal C_k(1)=T_{d,k},\qquad
\mathcal C_k(1+\sqrt2)=V_d.
$$
All differentiations below are taken on the interiors of the stated parameter
intervals.  Endpoint values, including those at which a displayed weight
vanishes, and the corresponding conclusions on the closed intervals follow
by continuity.
For each family, we show that the entropy is monotone or increases and then
decreases.  Consequently, it has no interior minimum, and its minimum is
attained at an endpoint.

It is enough to use natural logarithms.  We repeatedly use the
elementary estimates
\begin{equation}\label{elementary-log-estimates}
2u<\ln\frac{1+u}{1-u}
<2\left(u+\frac{u^3}{3(1-u^2)}\right),\qquad 0<u<1,
\end{equation}
and
\begin{equation}\label{elementary-log-estimate-second}
\ln(1+y)<\frac{y(y+2)}{2(y+1)},\qquad y>0.
\end{equation}
Indeed,
$$
\ln\frac{1+u}{1-u}
=2\sum_{j=0}^{\infty}\frac{u^{2j+1}}{2j+1},
\qquad
0<\sum_{j=1}^{\infty}\frac{u^{2j+1}}{2j+1}
<\frac13\sum_{j=1}^{\infty}u^{2j+1}
=\frac{u^3}{3(1-u^2)},
$$
which proves \eqref{elementary-log-estimates}.  For
\eqref{elementary-log-estimate-second}, the difference
$f(y)=y(y+2)/[2(y+1)]-\ln(1+y)$ satisfies $f(0)=0$ and
$f'(y)=y^2/[2(y+1)^2]>0$ for $y>0$.

We shall also use the following derivative identity.  If positive weights
$b_j(x)$ are normalized by $Z(x)=\sum_j b_j(x)$, set
$$
p_j(x)=\frac{b_j(x)}{Z(x)},\qquad
\widetilde H(x):=-\sum_jp_j(x)\ln p_j(x)
=(\ln2)H\bigl((p_j(x))_j\bigr).
$$
Then
\begin{equation}\label{normalized-entropy-derivative}
\widetilde H'
=\frac{Z'\sum_j b_j\ln b_j-Z\sum_j b_j'\ln b_j}{Z^2}.
\end{equation}
It follows by differentiating
$\widetilde H=\ln Z-Z^{-1}\sum_j b_j\ln b_j$.  Formula
\eqref{normalized-entropy-derivative} gives the derivative expressions
analyzed below.  For brevity, write
$$
\eta_0(x)=\ln\frac1x,\qquad
\eta_-(x)=\ln\frac1{1-2x},\qquad
\eta_+(x)=\ln(1+2x),
$$
whenever the corresponding expression is defined.

\smallskip
\noindent\emph{$\mathcal A$-family.}
The entropy derivative has the sign of
$$
(d-2+x-x^2)\ln(1-2x)-2x(d-1-x)\ln x.
$$
Set $q_A=d-2$ and
$$
\Phi_A(x)=
\frac{2x(q_A+1-x)\eta_0(x)}{(q_A+x-x^2)\eta_-(x)}.
$$
Thus the derivative has the sign of $\Phi_A-1$.  If
$Y_A=2x(q_A+1-x)/(q_A+x-x^2)$, then, on
$0<x<\sqrt2-1$, logarithmic differentiation gives
$$
(\ln\Phi_A)'=\frac{Y_A'}{Y_A}-\frac1{x\eta_0(x)}
-\frac2{(1-2x)\eta_-(x)},
$$
and
$$
\frac1x-\frac{Y_A'}{Y_A}
=\frac{(1-x)(2q_A+1-x)}
{(q_A+1-x)(q_A+x-x^2)}>0.
$$
Hence $Y_A'/Y_A<1/x$.  Moreover, the power series with positive terms
$$
\eta_-(x)=-\ln(1-2x)=\sum_{j=1}^{\infty}\frac{(2x)^j}{j}
<\frac{2x}{1-2x}
$$
implies $2/[(1-2x)\eta_-(x)]>1/x$.  Therefore
$$
(\ln\Phi_A)'<\frac1x-\frac1{x\eta_0(x)}-\frac1x
=-\frac1{x\eta_0(x)}<0.
$$
Thus $\Phi_A$ is strictly decreasing.  The entropy derivative therefore
changes sign at most once, and only from positive to negative, so the entropy
has no interior minimum in the family $\mathcal A$.

\smallskip
\noindent\emph{$\mathcal B_k$-family.}
Put $q_B=d-1-k\ge2$.  Its entropy derivative has
the sign of
$$
D_B(x)=k(x^2+x-q_B)\ln(1+2x)
-2x[k(x+1)+q_B]\ln x.
$$
Suppose first that $q_B\ge3$.  Since $q_B-x-x^2>0$ for $0<x<1$,
$D_B(x)$ has the sign of $\Phi_B(x)-1$, where
$$
\Phi_B(x)=
\frac{2x[k(x+1)+q_B]\eta_0(x)}
{k(q_B-x-x^2)\eta_+(x)}.
$$
Logarithmic differentiation gives
\begin{align*}
(\ln\Phi_B)'&=\frac1x+\frac{k}{k(x+1)+q_B}
+\frac{1+2x}{q_B-x-x^2}-\frac1{x\eta_0(x)}
-\frac2{(1+2x)\eta_+(x)}.
\end{align*}
The upper bound
$\eta_+(x)<2x(1+x)/(1+2x)$ from
\eqref{elementary-log-estimate-second} implies
$$
\frac1x-\frac2{(1+2x)\eta_+(x)}<\frac1{1+x}.
$$
Moreover,
$$
\frac{k}{k(x+1)+q_B}<\frac1{1+x},\qquad
\frac{1+2x}{q_B-x-x^2}\le
\frac{1+2x}{3-x-x^2},
$$
so combining the three estimates gives
$$
(\ln\Phi_B)'<
\frac2{1+x}+\frac{1+2x}{3-x-x^2}
-\frac1{x\eta_0(x)}.
$$
Applying the upper bound in \eqref{elementary-log-estimates} with
$u=(1-x)/(1+x)$ gives
$$
\eta_0(x)<\frac{(1-x)(x^2+10x+1)}{6x(1+x)}.
$$
Consequently,
$$
(\ln\Phi_B)'<
\frac{5x^4+2x^3-54x^2+34x-11}
{(x-1)(x+1)(x^2+x-3)(x^2+10x+1)}<0.
$$
Indeed, the denominator is positive on $0<x<1$.  The numerator is smaller
than $-47x^2+34x-11$ because $5x^2+2x-7<0$ there; the latter quadratic
is negative because its leading coefficient is negative and its
discriminant is $-912$.  Hence $\Phi_B$ is strictly decreasing, so $D_B$,
and therefore the entropy derivative, can change sign only from positive to
negative.  There is no interior minimum.

It remains to consider $q_B=2$.  The bounds
$$
\eta_0(x)>\frac{2(1-x)}{1+x},\qquad
\eta_+(x)<\frac{2x(1+x)}{1+2x},
$$
follow from \eqref{elementary-log-estimates} and
\eqref{elementary-log-estimate-second}, respectively.  Substitution
into $D_B$ yields the explicit lower bound
$$
D_B(x)>
\frac{2x(1-x)}{(1+x)(1+2x)}
\bigl[kx(1-x^2)+8x+4\bigr]>0.
$$
Thus in this case the entropy is strictly increasing, and again its
minimum occurs at an endpoint.

\smallskip
\noindent\emph{$\mathcal C_k$-family.}
Recall that $\ell=d-k-2\ge1$.  Its entropy derivative
has the sign of
$$
k[\ell(x^2+x)-2]\ln(1+2x)
-2\ell x[k(x+1)+2]\ln x.
$$
For $\ell\ge2$, this is the sign of $\Phi_C-1$, with
$$
\Phi_C(x)=
\frac{k[\ell(x^2+x)-2]\ln(1+2x)}
{2\ell x[k(x+1)+2]\ln x}.
$$
Its exact logarithmic derivative is
\begin{align*}
(\ln\Phi_C)'={}&
\frac{\ell(2x+1)}{\ell(x^2+x)-2}-\frac1x
-\frac{k}{k(x+1)+2}\\
&+\frac{2}{(1+2x)\ln(1+2x)}-\frac1{x\ln x}.
\end{align*}
For $\ell\ge2$ and $k\ge1$,
$$
\frac{\ell(2x+1)}{\ell(x^2+x)-2}-\frac1x
\le\frac{x^2+1}{x(x^2+x-1)},
\qquad
-\frac{k}{k(x+1)+2}\le-\frac1{x+3}.
$$
Applying \eqref{elementary-log-estimates} with
$u=x/(x+1)$ and $u=(x-1)/(x+1)$ gives, respectively,
$$
\ln(1+2x)>\frac{2x}{x+1},
\qquad
\ln x<\frac{(x-1)(x^2+10x+1)}{6x(x+1)}.
$$
Combining these four estimates yields
\begin{align*}
(\ln\Phi_C)'&<
\frac{x^2+1}{x(x^2+x-1)}-\frac1{x+3}
+\frac{x+1}{x(2x+1)}
-\frac{6(x+1)}{(x-1)(x^2+10x+1)}\\
&=
\frac{N_C(x)}
{(x-1)(x+3)(2x+1)(x^2+x-1)(x^2+10x+1)},
\end{align*}
where
$$
N_C(x)=x^6+6x^5+18x^4-69x^3-78x^2-33x+11.
$$
The denominator is positive for $1<x<1+\sqrt2$.  With $y=x-1$,
$$
N_C(x)=y^6+12y^5+63y^4+83y^3-102y^2-288y-144
<28y^2-74y-144<0,
$$
where $0<y<\sqrt2$; the first strict inequality is equivalent to
$y(2-y^2)(y^3+12y^2+65y+107)>0$.  The final quadratic is convex and is
negative at both endpoints of $[0,\sqrt2]$.  It therefore lies below the
line segment joining its endpoint values and is negative throughout this
interval.  Hence $\Phi_C$ is strictly decreasing, so the entropy derivative
can change sign only from positive to negative.  There is no interior
minimum.

\begin{samepage}
\noindent\mbox{It remains to treat $\ell=1$.  Set}
$$
J(x)=2x(x+1)\ln x-(x-1)(x+2)\ln(1+2x).
$$
\end{samepage}
Here $J(1)=0$, $J'(1)=4-\ln27$, and $J''(1)=4-2\ln3$.
The bound \eqref{elementary-log-estimate-second} with $y=2$ gives
$\ln3<4/3$, so both displayed derivatives are positive.  Moreover,
$$
J'''(x)=
\frac{2(8x^4+8x^3+16x^2-4x-1)}{x^2(2x+1)^3}>0.
$$
Indeed, the derivative of the polynomial in the numerator is
$32x^3+24x^2+32x-4>0$ for $x\ge1$, and the polynomial has value $27$ at
$x=1$.  Hence $J'''>0$.  Starting with the displayed values at $x=1$ then
gives $J''>0$, $J'>0$, and $J>0$ for $x>1$.  In this case
the sign expression for the entropy derivative is exactly
$$
-kJ(x)-4x\ln x<0.
$$
Every family therefore takes its minimum at an endpoint.
Substitution gives
\eqref{ud-def}--\eqref{taud-def}.

\section{Exact rational verification of the finite entropy comparisons}
\label{appendix-entropy-comparisons}

This appendix verifies the six finite endpoint orderings used in the proof
of Theorem~\ref{entropy-main-thm}.  The calculation uses rational interval
arithmetic only; no floating-point evaluation enters the proof.

For an integer $N\ge0$ and $1\le y\le2$, set
$$
w=\frac{y-1}{y+1},\qquad
L_N(y)=2\sum_{j=0}^{N}\frac{w^{2j+1}}{2j+1},\qquad
E_N(y)=\frac{2w^{2N+3}}{(2N+3)(1-w^2)}.
$$
Since $y=(1+w)/(1-w)$ and $0\le w\le1/3$, the convergent expansion
$$
\ln y=\ln\frac{1+w}{1-w}
=2\sum_{j=0}^{\infty}\frac{w^{2j+1}}{2j+1}
$$
has a nonnegative remainder satisfying
$$
0\le \ln y-L_N(y)
=2\sum_{j=N+1}^{\infty}\frac{w^{2j+1}}{2j+1}
\le\frac{2}{2N+3}\sum_{j=N+1}^{\infty}w^{2j+1}
=E_N(y).
$$
Consequently,
\begin{equation}\label{appendix-log-bound}
L_N(y)\le\ln y\le L_N(y)+E_N(y).
\end{equation}
Write
$$
\mathcal L_N(y)=[L_N(y),L_N(y)+E_N(y)].
$$
For any positive rational $x$, write $x=2^b y$ with
$b\in\mathbb Z$ and
$1\le y<2$.
Applying \eqref{appendix-log-bound} to $y$ and to $2$ gives the rational
interval
$$
\mathcal J_N(x)=b\mathcal L_N(2)+\mathcal L_N(y),
$$
which contains
$$
\ln x=b\ln2+\ln y.
$$
If $I=[x_-,x_+]$ is a positive rational interval, monotonicity gives an
interval whose lower endpoint is the lower endpoint of
$\mathcal J_N(x_-)$ and whose upper endpoint is the upper endpoint of
$\mathcal J_N(x_+)$.  We denote this interval by $\mathcal J_N(I)$.
All additions, subtractions, products, and quotients below are the standard
interval operations; hence every endpoint remains rational.

We now take $N=8$.  Thus, for $1\le y\le2$ and
$w=(y-1)/(y+1)$,
$$
0\le \ln y-2\sum_{j=0}^{8}\frac{w^{2j+1}}{2j+1}
\le \frac{2w^{19}}{19(1-w^2)}.
$$
We also use
$$
s_{\mathrm{lo}}=\frac{7071}{5000}=1.4142,
\qquad
s_{\mathrm{hi}}=\frac{14143}{10000}=1.4143.
$$
These bounds enclose $\sqrt2$, since
$$
2-s_{\mathrm{lo}}^2=\frac{959}{25000000}>0,
\qquad
s_{\mathrm{hi}}^2-2=\frac{24449}{100000000}>0.
$$
Hence
$$
\mathcal I_\theta=
\left[\frac{14571}{2500},\frac{29143}{5000}\right]
$$
contains $\theta=3+2\sqrt2$.

Substituting $\mathcal I_\theta$ into \eqref{vd-def} and using
$\mathcal J_8$ for every logarithm in
\eqref{ud-def}--\eqref{taud-def} gives rational intervals for all
candidate entropies.  Division by the positive interval
$\mathcal J_8(2)$ converts the bounds from natural logarithms to bits.
The same calculation gives
$$
\widehat t_6\in(1,2),\quad
\widehat t_8\in(2,3),\quad
\widehat t_{10}\in(3,4),\quad
\widehat t_{12}\in(3,4),\quad
\widehat t_{14}\in(4,5).
$$
For $d\in\{4,6,8,10,12,14\}$, let
$$
\mathcal X_d=\{U_d,V_d\}\cup\{T_{d,k}:1\le k\le d-3\}.
$$
For $d=4$ there is only one $T$-candidate.  For
$d\in\{6,8,10,12,14\}$, the displayed locations of $\widehat t_d$, together
with the decrease of $\tau_d(k)$ before $\widehat t_d$ and its increase
afterward, show that among the $T$-candidates only the nearest competitors on
either side need be checked.  Applying the rational interval substitution to these candidates
and to $U_d,V_d$ identifies the minimizing endpoint $M_d$ and the closest
competing endpoint $X_d$.  The interval comparisons also certify that the
upper endpoint of the interval enclosing $H(X_d)-H(M_d)$ is smaller than the
lower endpoint of the corresponding interval for every remaining endpoint.
Finally, choosing a positive rational number below the certified lower endpoint for
$H(X_d)-H(M_d)$ gives the displayed gap certificate $\gamma_d$, so that
$$
0<\gamma_d<H(X_d)-H(M_d)
=\min_{X\in\mathcal X_d\setminus\{M_d\}}
\bigl(H(X)-H(M_d)\bigr).
$$
The resulting compact certificate with exact rational entries is
$$
\begin{array}{c@{\qquad}c@{\qquad}c@{\qquad}c}
d&M_d&\text{closest }X_d&\gamma_d\\ \hline
4&U_4&V_4&15798272292/10^{12}\\
6&V_6&U_6&2140026517/10^{12}\\
8&T_{8,3}&T_{8,2}&4030792107/10^{12}\\
10&T_{10,3}&T_{10,4}&4824739822/10^{12}\\
12&T_{12,4}&T_{12,3}&6493155455/10^{12}\\
14&T_{14,5}&T_{14,4}&545679265/10^{12}.
\end{array}
$$
Every entry in the last column is positive, so the candidates $M_d$ are the
unique entropy-minimizing endpoints.  This proves all endpoint comparisons
used in the proof of Theorem~\ref{entropy-main-thm}.

\end{document}